\documentclass{siamart0516}



\usepackage{amsfonts, amsmath, amssymb}
\usepackage{mathrsfs}
\usepackage{algorithm}
\usepackage{algorithmicx}
\usepackage{algpseudocode} 
\usepackage{afterpage}
\usepackage{graphicx}
\usepackage{datetime}

\newsiamthm{Theorem}{Theorem}
\newsiamthm{Definition}{Definition}
\newsiamthm{Lemma}{Lemma}
\newsiamthm{Corollary}{Corollary}
\newsiamthm{Observation}{Observation}

\ifpdf
  \DeclareGraphicsExtensions{.eps,.pdf,.png,.jpg}
\else
  \DeclareGraphicsExtensions{.eps}
\fi

\newcommand{\TheTitle}{A  $2/3$-Approximation Algorithm for Vertex Weighted Matching in Bipartite Graphs} 
\newcommand{\TheAuthors}{F. Dobrian, M. Halappanavar, A. Pothen, A. Al-Herz}

\headers{$2/3$-Approximation for Vertex Weighted Matching}{Dobrian, Halappanavar, Pothen and Al-Herz}

\title{{\TheTitle}\thanks{
\funding{We acknowledge support from 
NSF grants  CCF-1637534 and CCF-1552323; the 
U.S. Department of Energy through grant DE-FG02-13ER26135;  the Exascale Computing Project (17-SC-20-SC), a collaborative effort of the DOE Office of Science and the NNSA;  
and the Pacific Northwest National Laboratory,  operated by Battelle for the  
DOE under Contract DE-AC05-76RL01830. 
}}}

\author{
  Florin Dobrian\thanks{Conviva Corporation, 2 Waters Park, Suite 150, San Mateo CA 94403 (\email{dobrian@cs.odu.edu}).}
  \and
  Mahantesh Halappanavar\thanks{Pacific Northwest National Lab,  902 Battelle Blvd., PO Box 999, MSIN-J4-30, Richland WA 99352 (\email{mhala@pnnl.gov}).}
  \and
  Alex Pothen\thanks{Purdue University, Department of Computer Science, West Lafayette IN 47907 (\email{apothen@purdue.edu},
    \email{aalherz@purdue.edu}).}
  \and 
  \phantom{Ahmed} Ahmed Al-Herz \footnotemark[4]
}

\usepackage{amsopn}


\ifpdf
\hypersetup{
  pdftitle={\TheTitle},
  pdfauthor={\TheAuthors}
}
\fi

\newcommand{\chg}[1]{{\color{black}{#1}}}




\begin{document}

\maketitle

\begin{abstract}
  We consider the maximum vertex-weighted matching problem (MVM), 
in which non-negative weights are assigned to the vertices of a graph, 
the weight of a matching is the sum of the weights of the matched vertices, and we are required to compute a matching of maximum weight. 
We describe an exact algorithm for MVM with  $O(|V|\, |E|)$ time complexity, and then we 
design a $2/3$-approximation algorithm for 
MVM on bipartite graphs by restricting the length of augmenting paths to at most three. The latter algorithm  has time complexity 
$O(|E| + |V| \log |V|)$. 

The approximation algorithm solves  two MVM problems on bipartite graphs, 
each with weights only on one vertex part,  
and then finds a matching from  these two matchings 
using the Mendelsohn-Dulmage Theorem. 
The approximation ratio of the algorithm is obtained by considering  failed vertices, 
i.e., vertices that the approximation algorithm
fails to match but the exact algorithm does. 
We show that at every step of the algorithm
there are two distinct heavier vertices that we can charge each  failed vertex to.

We have implemented the $2/3$-approximation algorithm for MVM and compare it with four other algorithms: 
an exact MEM algorithm, the exact MVM algorithm, a $1/2$-approximation algorithm for MVM, and a scaling-based  $(1-\epsilon)$-approximation algorithm for MEM. On a test set of nineteen problems with several millions of vertices, we show that the  maximum time taken by the exact MEM algorithm is $15$ hours, while it is $22$ minutes for the exact MVM algorithm, and less than $5$ seconds for the $2/3$-approximation algorithm. 
The $2/3$-approximation algorithm  obtains more than $99.5\%$ of the weight and cardinality  of an MVM, whereas the scaling-based approximation algorithms yield lower weights and cardinalities while taking an order of magnitude more time than the former algorithm. 

We also show that  MVM problems should not  be first transformed to MEM problems and solved using exact algorithms for the latter, since this transformation can increase runtimes by several orders of magnitude.

\end{abstract}

\begin{keywords}
  Vertex Weighted Matching, Graph Algorithms,  Approximation Algorithms
\end{keywords}

\begin{AMS}
  68Q25, 68R10, 68U05
\end{AMS}

\section{Introduction}
\label{section:introduction}

We consider a variant of the matching problem in graphs in which weights are assigned to the 
vertices,  the weight of a matching is the sum of the weights on the matched vertices,
and we are required  to compute a matching of maximum weight. 
We call this the maximum vertex-weighted matching problem (MVM). 
In this paper we describe a $2/3$-approximation algorithm for MVM in 
bipartite graphs and  implement it efficiently. 
We compare its performance with several algorithms:
an algorithm for computing maximum edge-weighted matchings, an (exact) algorithm for MVM, a $1/2$-approximation Greedy algorithm, and a $(1-\epsilon)$-approximation algorithm for maximum edge weighted matchings. 

Matching is a combinatorial  problem that has been extensively studied since the 1960's. 
The problem of computing a matching with the maximum cardinality of edges is 
the maximum cardinality matching (MCM) problem.
When weights are assigned to the edges, the problem of computing a matching with 
the maximum sum of weights of the matched edges is  the maximum edge-weighted matching 
problem (MEM). There are other variants too: One could ask for 
a matching that has the maximum or minimum sum of edge weights
among all  maximum cardinality matchings. 
Or one could ask for a maximum bottleneck matching, where we seek to maximize the minimum weight 
among all matched edges. 
While there have been a number of papers on the MCM and MEM problems,
there has been little prior work on the MVM problem. 
Let $G= (V, E)$ denote a bipartite graph with $n \equiv |V|$ vertices and $m \equiv |E|$ edges. 
Spencer and Mayr~\cite{spencer1} have described an algorithm to solve the MVM exactly with  $O(m \sqrt{n}  \log n)$ time complexity.
We do not know of prior work on approximation algorithms for 
the MVM problem.  
Background information on matchings and approximation algorithms is provided in the next Section. 

For the MCM problem on bipartite graphs, it is now known that algorithms with  $O(nm)$ worst-case time complexity are among the practically fastest algorithms relative to asymptotically faster algorithms, e.g., the Hopcroft-Karp algorithm  with $O(m \sqrt{n})$ complexity. Among these algorithms are a Multiple-Source BFS-based algorithm, a  Multiple-Source DFS-based algorithm,   and a Push-Relabel algorithm 
\cite{Azad+:TPDS,Duff+:2011,PF90}.
All of these algorithms  employ  greedy initialization algorithms such as the Karp-Sipser algorithm, 
and  employ other  enhancements to make the implementations run fast.  
These algorithms have also been implemented in parallel  on modern shared-memory multithreaded processors. 
 
An MVM problem can be  transformed into an MEM problem by summing the weights 
of an endpoints of an edge and assigning the sum as the weight of an edge. 
Thus an exact or approximation algorithm for MEM becomes an exact or approximation
algorithm for MVM as well. 
\chg{However, we show in the experimental section that when edge weights are 
derived this way, they are highly correlated, adversely affecting the run times of the algorithms.}
It is possible to design exact and approximation algorithms that do not make use of linear programming formulations for MVM. 
These  algorithms are conceptually simpler, 
easier to implement, and  provide insights into  the ``structure'' of the MVM problem. 
The $2/3$- and $1/2$-approximation algorithms have near-linear time complexity with small constants, 
and in practice, are quite fast and deliver high quality approximations. 
\chg{A  $2/3$-approximation algorithm for the MVM problem is obtained by limiting the augmenting path lengths to $3$, 
but this technique does not lead to a $2/3$-approximation for 
MEM.}


MVM problems arise in many contexts, such as the design of 
network switches~\cite{tabatabaee1},
schedules for training of astronauts~\cite{bell1}, 
computation of sparse bases for the null space or the column space of a rectangular
matrix~\cite{coleman3,pinar1}, etc. 
Our interest in this problem was sparked by our work on sparse bases for the null space and column space of rectangular matrices. The null space basis is useful in successive quadratic programming (SQP) methods to solve nonlinear optimization problems; here the matrix sizes 
are determined by the number of constraints in the 
optimization problem. 
In this context a matroid greedy algorithm can be shown to compute 
a sparsest such basis. For the null space basis, the problem still remains 
NP-hard since computing a sparsest null vector is already NP-hard. 
However, for the column space basis, this leads to a polynomial time 
algorithm, and it can be efficiently implemented by computing  a maximum weight vertex-weighted matching. 
As will be seen from our results, approximation algorithms are needed to make this algorithm practical since an optimal algorithm can take several hours on large graphs. 
We suspect that a number of problems that could  be modeled as MVM problems 
have been modeled in earlier work as MEM problems due to the extensive literature on the latter problem. 

The remainder of this paper is organized as follows. 
Section~\ref{sec:sectionBackground} provides background  on matching,
and describes the concepts of $M$-reversing and $M$-increasing paths in a graph. 
The next Section~\ref{sec:sectionCharacterizationOptimal} characterizes 
MVMs using the concept of weight vectors as well as augmenting paths and increasing paths. 
An exact algorithm for MVM is briefly described in 
Section~\ref{sec:sectionOptimalSolutions}, 
and Section~\ref{sec:sectionTwoThirdApprox} describes a  $2/3$-approximation algorithm for MVM on bipartite graphs.
Next, Section~\ref{sec:sectionCorrectness-TwoThirdsAlg} proves the correctness 
of the $2/3$-approximation algorithm.
Section~\ref{sec:sectionHalfApprox} briefly discusses the Greedy $1/2$-approximation
algorithm and its correctness. 
Computational results on the the exact MEM  and MVM algorithms, the Greedy $1/2$-approximation,  $2/3$-approximation, and scaling-based $(1- \epsilon)$-approximation algorithms are included in Section~\ref{sec:sectionExperiments}. 
We conclude in the  final Section~\ref{sec:sectionConclusions}.  

Preliminary versions of these results were included in the PhD thesis 
of one of the authors~\cite{Halappanavar:thesis},
and in an  unpublished report~\cite{Dobrian+:report}.

\section{Background}
\label{sec:sectionBackground} 

We define the basic terms we use here, and refer the reader to discussions in the following 
books for additional background on matching theory~\cite{Burkard1,lawler1,lovasz1,papadimitriou1,schrijver1}, and approximation algorithms~\cite{vazirani1,Williamson+:book}.  

A {\em matching\/}  in a graph $G= (V, E)$ is a set of edges $M$ such that 
for each vertex $v \in V$, at most one edge in $M$ is incident on $v$. 
An edge in $M$ is a matched edge, and otherwise, it is an unmatched edge. 
Similarly a vertex which is an endpoint of an edge in $M$ is matched, 
and otherwise it is an unmatched vertex. 

A {\em path\/} in a graph is a sequence of distinct vertices  $\{u_i: i=1, \ldots, k\}$ such that 
consecutive vertices form an edge in the graph.   
The length of a path is the number of edges (not the number of vertices) in it. 
A cycle is a path in which the first and last vertices are the same. 
Given a matching $M$ in a graph, an {\em $M$-alternating path\/} 
is a path in which matched and unmatched edges alternate. 
If the first and the last vertices  in an $M$-alternating path $P$ are unmatched,
then it is an {\em $M$-augmenting path\/}, since by flipping  the matched 
and unmatched edges along $P$ we obtain a new matching that has one more edge than the original matching. 
The augmented matching is obtained by the symmetric difference $M \oplus P$. 
An augmenting path must have an odd number of edges since it has one more unmatched
edge than the number of  matched edges. 
Figure~\ref{Fig:no-incr-paths} shows an augmenting path joining vertices $u$ and $v$ in the top figure of Subfigures (1) and (2). 
Here solid edges are matched, and the dashed edges are unmatched.
(The vertices $w$, $w'$, and the neighbor of $w'$ are not involved in this path.)
The results of an augmentation are shown in the bottom figures
of these Subfigures. 

An {\em $M$-reversing path\/}  is an alternating path with one endpoint $M$-matched and 
the other endpoint $M$-unmatched.  
Such a path has even length (number of edges), and half the edges are matched and 
the other half are unmatched. 
In Figure~\ref{Fig:no-incr-paths}, the paths joining $w$ and $w'$ in the bottom figures of Subfigures (1) and (2) are 
$M'$-reversing paths. 
We can exchange the matched and unmatched edges on a reversing path without changing the cardinality of the matching. 
However, it might be possible to increase the weight of a vertex-weighted matching in this way. 
An {\em $M$-increasing path\/}  is an $M$-reversing path such that its $M$-unmatched endpoint  $u$ 
is heavier than its $M$-matched endpoint $u'$.  
Let  $\phi(u)$ denote the vertex weight of a vertex $u$. 
For an increasing path, we have $\phi(u) > \phi(u')$. If we exchange matched and 
unmatched edges along this path, then the weight of the matching increases by $\phi(u) - \phi(u')$. 

An {\em exact algorithm\/} for a maximization version of an optimization problem on graphs computes a 
solution with the maximum value of its objective function. 
For the matching problems considered here,
there exist polynomial time algorithms for computing a 
maximum weighted matching. However, since the time complexity
of these algorithms is high, recent work has focused on 
developing approximation algorithms for these problems that
run in nearly linear time in the number of edges in the graph. 
An {\em approximation algorithm\/} for a maximization problem
computes a solution such that the ratio of the 
value of the objective function obtained by the approximation
algorithm to that of the exact algorithm is bounded by a constant or a function of the input size, for all graphs
that could be input to the problem. An upper bound on this 
ratio over all graphs is the approximation ratio of the algorithm. 
The approximation algorithms that we design  for the MVM problem  satisfy approximation ratios of $2/3$ or $1/2$. 

Hopcroft and Karp~\cite{hopcroft1} showed that if $M$ is a matching in a graph $G$ such that a shortest augmenting path has length at least $(2k-1)$ edges, then  $M$ is a $(k-1)/k$-approximation matching for a maximum cardinality matching.

Recent work has focused on  developing several approximation algorithms for MEM
that run in  time linear in the number of edges in the graph. 
A number of  $1/2$-approximation algorithms are known for MEM,
including the Greedy algorithm, the Locally Dominant edge algorithm, and a Path-growing algorithm~\cite{preis1, drake1}. 
Currently the practically fastest $1/2$-approximation algorithm is the Suitor algorithm of Manne and Halappanavar~\cite{MH14}, 
which employs a proposal based approach similar to algorithms for the  stable matching problem.  
This algorithm has a higher degree of concurrency since vertices can be processed in any order to extend proposals to their eligible heaviest  neighbors, since proposals can be annulled. The parallel Suitor algorithm has $O(\log m \log \Delta)$ parallel depth and $O(m)$ work when the edge weights are chosen uniformly at random (here $\Delta$ is the maximum degree of a vertex)~\cite{Khan+:parallel-bedgecover}. 
The Locally Dominant edge algorithm and the Suitor algorithm have also been implemented on 
multi-threaded parallel architectures~\cite{Halappanavar+:Computer,MH14}. 
 
In practice, some of the $1/2$-approximation algorithms compute matchings with $95\%$ or more of the weight of a maximum edge-weighted matching for many graphs. 
In addition, $(2/3 - \epsilon)$-approximation algorithms have also been designed for MEM
~\cite{drake2,pettie1}. 
These algorithms are slower than the $1/2$-approximation algorithms and do not 
improve the weight of the matching by much in practice~\cite{MS07}. 

More recently, for any $\epsilon >  0$,  
a  $(1-\epsilon)$-approximation algorithm for MEM with time complexity 
$O(m \epsilon^{-1} \log \epsilon^{-1})$ has been designed
by Duan and Pettie~\cite{Duan+:jacm}. 
This algorithm is based on a scaling-based 
primal-dual  approach, requires the computation and updating of blossoms for non-bipartite graphs, and is  more expensive than   
the simpler $1/2$-approximation algorithms.
We will show in the Results section that this algorithm is slower than the $2/3$-approximation algorithm considered in this paper, 
while surprisingly computing matchings of lower weight.  
The Duan-Pettie  paper  surveys earlier work on exact and approximation algorithms for the MEM problem. 
Hougardy~\cite{Hougardy:survey}  has also provided a recent survey of developments in 
approximation algorithms for matchings. 
MEM problems arise in sparse matrix computations (permuting large elements to the diagonal of a sparse matrix)~\cite{duff1}, network alignment~\cite{Khan+:SC12}, scheduling problems, etc.

Approximation algorithms  have  now been designed for  several problems related to matching: 
maximum vertex-weighted matching, maximum edge-weighted matching, maximum edge-weighted $b$-matching
\cite{Khan+:b-matching,Khan+:SC12}, the minimum weight edge cover, and the minimum weight $b$-edge cover problem
\cite{Khan+:b-edgecover}. Approximation is a paradigm for designing parallel algorithms for these problems, and such algorithms has been shown to have good parallel performance.

\section{Characterization of Maximum  Vertex Weighted Matchings}
\label{sec:sectionCharacterizationOptimal}

In this section we characterize an  MVM two different ways:
First,   in terms of augmenting paths and increasing paths, and second, in terms of the weights in the  matching. 

If all the vertex weights are positive, then any maximum vertex weighted matching is 
a maximum cardinality matching as well. 
If some of the vertex weights are zero, then without loss of generality,
we can  choose a maximum vertex weighted matching to have maximum cardinality also as shown below.  
\begin{Lemma}
Let $G = (V, E)$ be a graph and $\phi: V \mapsto R_{\geq 0}$ be a non-negative weight function. 
There is a maximum vertex-weighted  matching $M$ that is also a maximum cardinality matching in $G$. 
\end{Lemma}

\begin{proof}
Consider what happens to the vertices when we  augment a vertex-weighted matching by an augmenting path. 
Both endpoints of the augmenting path are now matched (these were previously unmatched), and all interior vertices in the path continue to remain matched. (Figure~\ref{Fig:no-incr-paths} illustrates this.) 
Thus in an algorithm that computes vertex-weighted matchings solely by augmentations, once a vertex is matched it is never  unmatched, 
and it will be matched at every future step in the algorithm.
(We call this the ``once a matched vertex, always a  matched vertex'' property of augmentations of a vertex weighted matching.) 
This implies that if the weights are non-negative, each augmentation causes the weight of a matching to increase or stay the same. Thus we can always choose an MVM to have maximum cardinality of edges. 
\end{proof}

Of course, the set of matched edges and unmatched edges are exchanged along an augmenting path, so there is no  corresponding ``once a  matched edge, always a  matched edge'' property.
Note also that when we use an increasing path between two vertices $w$ and $w'$ to increase the weight of a matching, 
then the vertex $w$ gets matched, $w'$ gets unmatched, 
and all interior vertices in the path continue to be matched. 
(Again, Figure~\ref{Fig:no-incr-paths} provides examples.)
Hence the property of ``once a matched vertex, always a matched vertex" is not true of  an algorithm that uses increasing paths during its execution. 

If some of the vertex weights are negative, we can transform the problem so that we need
consider only nonnegative weights  as shown in Spencer and Mayr~\cite{spencer1}.
For each vertex $v$ with a negative weight, we add a new vertex $v'$, an edge $(v,v')$,
with  the weight of $v'$ set to the absolute value of the weight of $v$,
and the new weight of $v$ set to zero.  
An MVM in the transformed graph leads to an MVM in the original graph; 
however, this transformation might not preserve approximations.  
From now on, we assume that all weights are non-negative.

We turn to the first of our characterizations of a MVM. 
\begin{Theorem}
Let $G = (V, E)$ be a graph and $\phi: V \mapsto R_{\geq 0}$ be a non-negative weight function.  
A  matching $M$ is an MVM that also has maximum cardinality if and only if 
(1) there is no $M$-augmenting path in $G$, and 
(2) there is no $M$-increasing path in $G$. 
\end{Theorem} 

\begin{proof}
A matching $M$ has maximum cardinality if and only if there is no augmenting path with respect to it~\cite{papadimitriou1}. Hence we need to prove only (2). 

For the  {\em only if\/} part, 
if there were an $M$-increasing path $P$, then the symmetric difference $M \oplus P$ would yield a 
vertex-weighted matching of larger weight, contradicting the assumption that $M$ has maximum vertex weight. 
For the {\em if\/} part, consider a maximum vertex-weighted matching $M_1$ and 
a matching that does not have an augmenting path or increasing path with respect to it, $M_2$. 
We will show that $M_2$ has the same weight as $M_1$. 
The symmetric difference $M_1 \oplus M_2$ consists of cycles and paths. 
A cycle consists of vertices matched by both matchings, and hence cannot account for any difference between
them in weight. 
Every path must have even length, and an equal number of edges from $M_1$ and $M_2$, 
for otherwise we would be able to augment one of the two matchings, and we  need  consider only increasing paths. 
By our assumption, $M_2$ does not have an increasing path with respect to it. 
But there cannot be an increasing path $P$ with respect to $M_1$ either, 
for such a path would enable us to increase its weight
by the symmetric difference $M_1 \oplus P$. 
\end{proof}
This Theorem was proved  by Tabatabaee et al.~\cite{tabatabaee1},
who  seem to restrict the result to  bipartite graphs. 

Now we characterize an MWM in terms of the weights. 
Given a matching $M$ we can define a weight vector $\Phi(M)$ that lists the weights of all the 
vertices matched by $M$ in non-increasing order. A weight is listed multiple times if there is more than one vertex with the same weight.  
We can compare the weight vectors of two matchings $M_1$ and $M_2$ in a graph $G$ lexicographically:  
we define the weight vector $\Phi(M_1) > \Phi(M_2)$ if $w_1 > w_2$,
where $w_1$ is the \emph{first} value in $\Phi(M_1)$  not equal to the corresponding value $w_2$ in $\Phi(M_2)$.  This definition can compare  matchings of different sizes in the same graph, since the matching with fewer edges can be augmented with zeros. 


\begin{Theorem}
Let $G = (V, E)$ be a graph and $\phi: V \mapsto R_{\geq 0}$ be a non-negative weight function.  
A matching $M$ is an MVM   if and only if its  weight vector 
$\Phi(M)$ is lexicographically maximum among all weight vectors. 
\end{Theorem} 

\begin{proof}
For the {\em only if\/} part, let $M$ denote a maximum vertex-weighted matching and $M^l$ denote 
a matching whose weight vector is lexicographically maximum. 
By our choice, the matching $M$ has maximum cardinality.  
Similarly, $M^l$ also has maximum cardinality, for otherwise we could augment the matching 
to a maximum cardinality matching while keeping all of the matched vertices in $M^l$ matched
in the augmented matching,
due to the once-matched, always-matched property of augmentations. 
Hence suppose that the matching $M^l$ has weight less than the weight of $M$, and that 
the weight vector $\Phi(M)$ is not lexicographically maximum. 

Let the first lexicographic difference between the vectors $\Phi(M^l)$ and $\Phi(M)$ correspond 
to a vertex $u$ that is matched in  $M^l$ and unmatched in $M$. 
Now consider the symmetric difference of the two matchings $M \oplus M^l$. 
Since both matchings have the same maximum cardinality, the symmetric difference consists of 
cycles or  paths of even length in which edges from the two matchings alternate. 
As stated earlier, a cycle cannot contribute to the 
difference in the weights between the matchings. 
Among the alternating paths, there is one path $P$ of even length whose one endpoint is the 
vertex $u$ that is matched in $M^l$ but not in $M$. Denote the other endpoint of this path by $u'$. 
Since the path has even length, $u'$ is matched in $M$ but not in $M^l$. 
Also since the first lexicographic difference between the vectors $\Phi(M^l)$ and 
$\Phi(M)$ occurs at $u$,  and $\Phi(M^l)$ but not $\Phi(M)$ is lexicographically maximum, 
the weight of $u$ is greater than or equal to the weight of $u'$. 
The matching $M \oplus P$ would increase the weight of the maximum vertex-weight matching $M$
if the two weights were unequal. Hence these two weights are equal,
and we obtain a contradiction to our assumption that this was the first weight
where the weight vectors of the two matchings were different.  


The proof of  the {\em if\/} part is in \cite{mulmuley1}, and 
we include it here for completeness. 
Again let $M$ denote a maximum vertex-weighted matching, and let $M^l$ denote a matching 
whose weight vector is lexicographically maximum.
The symmetric difference $M \oplus M^l$ consists of cycles and paths. 
As stated earlier,  vertices  in alternating cycles cannot contribute to the differences in the weight vectors. 
Now the matching $M^l$ must have maximum cardinality since otherwise we could 
augment it  and get a lexicographically larger weight vector. 
Since both matchings $M$ and $M^l$ have maximum cardinality, there is no 
augmenting path with respect to either matching. 
Hence each path in the symmetric difference $M \oplus M^l$  must have even length. 
Let $u$ be one endpoint of one such  path $P$ that is matched in $M^l$ and unmatched in $M$, 
and $u'$ denote the other endpoint of $P$ that is matched in $M$ and unmatched in $M^l$. 
Since $M^l$ has the  lexicographically maximum  weight vector, 
we can only have $\phi(u') < \phi(u)$. 
Hence this is an increasing path and by replacing  the matched edges in $M$ 
on the path $P$ by  the edges in $M^l$  on $P$, we could increase the weight of the 
maximum vertex-weighted matching $M$. 
This contradiction proves the result. 
\end{proof}

The structural properties in these results facilitate the design of two classes of algorithms 
for MVM. 
One approach is to compute a maximum weighted matching from an empty matching, augmenting the 
matching by one edge in each iteration of the algorithm. By choosing to match 
vertices $u$ in decreasing order of weights, and by choosing a heaviest unmatched vertex 
reachable from $u$, we can ensure that an increasing path with 
respect to the current partial matching does not exist in the graph.
We call this the direct approach. 
The second, speculative approach, would begin with any  matching of maximum cardinality, and 
increase the weight by means of increasing paths, 
until a matching of maximum weight is reached. 

There are advantages associated with each of these approaches.
The direct approach, together with recursion,   has been employed  by 
Spencer and Mayr~\cite{spencer1} to design an $O(n^{1/2} m \log n)$ algorithm for MVM.  
The speculative approach could be efficient in combination with the 
Gallai-Edmonds decomposition~\cite{lovasz1}. 
This decomposition identifies a subgraph which has a perfect matching
in any maximum cardinality matching; in such a subgraph, 
{\em any\/}  maximum cardinality matching is a  maximum vertex-weighted matching as well. Thus we need solve an MVM only in the remainder of the 
graph. 
If the subgraph with the perfect matching is large, there could be 
substantial savings in run-time. 
Since an MCM can be computed practically much faster than an MVM, and the Gallai-Edmonds decomposition can be obtained in linear time from an MCM, this approach might be practically useful.

\section{An Exact  Algorithm for MVM}
\label{sec:sectionOptimalSolutions}

In this Section, we describe an algorithm that solves the MVM problem exactly,
primarily to show how our $2/3$-approximation algorithm can be derived from it in a natural manner. 
In Algorithm  MATCHD (see the displayed Algorithm~\ref{Exact-General-VWM}, 
we describe how an  MVM is computed by matching  
vertices in non-increasing order of weights. 
Here $Q$ is the set of unmatched vertices, and in each iteration the algorithm attempts to match 
a heaviest  unmatched  vertex $u$. 
From $u$, the algorithm searches for a heaviest unmatched vertex $v$ it can reach by 
an augmenting path $P$. If it finds $P$, then the matching is augmented by forming 
the symmetric difference of the current matching $M$ with $P$, and the vertices $u$ and $v$ 
are removed from the set of unmatched vertices. 
If it fails to find an augmenting path from $u$, then 
$u$ is removed from the set of unmatched vertices, since we do not need to search 
for an augmenting path from $u$ again. 
When all the unmatched vertices have been processed, the algorithm terminates. 

\begin{algorithm}
\centering
\caption{\textbf{Input:} A  graph $G$ with weights $\phi(V)$  on the vertex set $V$. 
\textbf{Output:}  A vertex-weighted matching  $M$ of maximum weight. 
\textbf{Effect:} Computes a maximum vertex-weight matching in the  graph $G$.}
\label{Exact-General-VWM}
\begin{algorithmic}[1]
\Procedure{\textsc{MatchD}}{$G = (V, E),  \phi(V)$} 
	\State $M \gets \phi;$ 
	\State $Q \gets V;$
	\While {$ Q  \neq \emptyset$}  
		\State $u \gets heaviest(Q);$ 
		\State $Q \gets Q - u;$ 
		\State Find an  augmenting path $P$  from  $u$ that reaches  
                         \newline \indent \indent \indent a heaviest unmatched vertex $v$;  
		\If{$P$ $\mathit{found}$} 
			\State $M \gets M \oplus P;$
                         \State $ Q \gets Q - v$; 
		\EndIf
	\EndWhile 
\EndProcedure
\end{algorithmic}
\end{algorithm}

To prove the correctness of the algorithm, we need the following Lemma. 

\begin{figure}[!thp]
\centering
\includegraphics[scale=0.1]{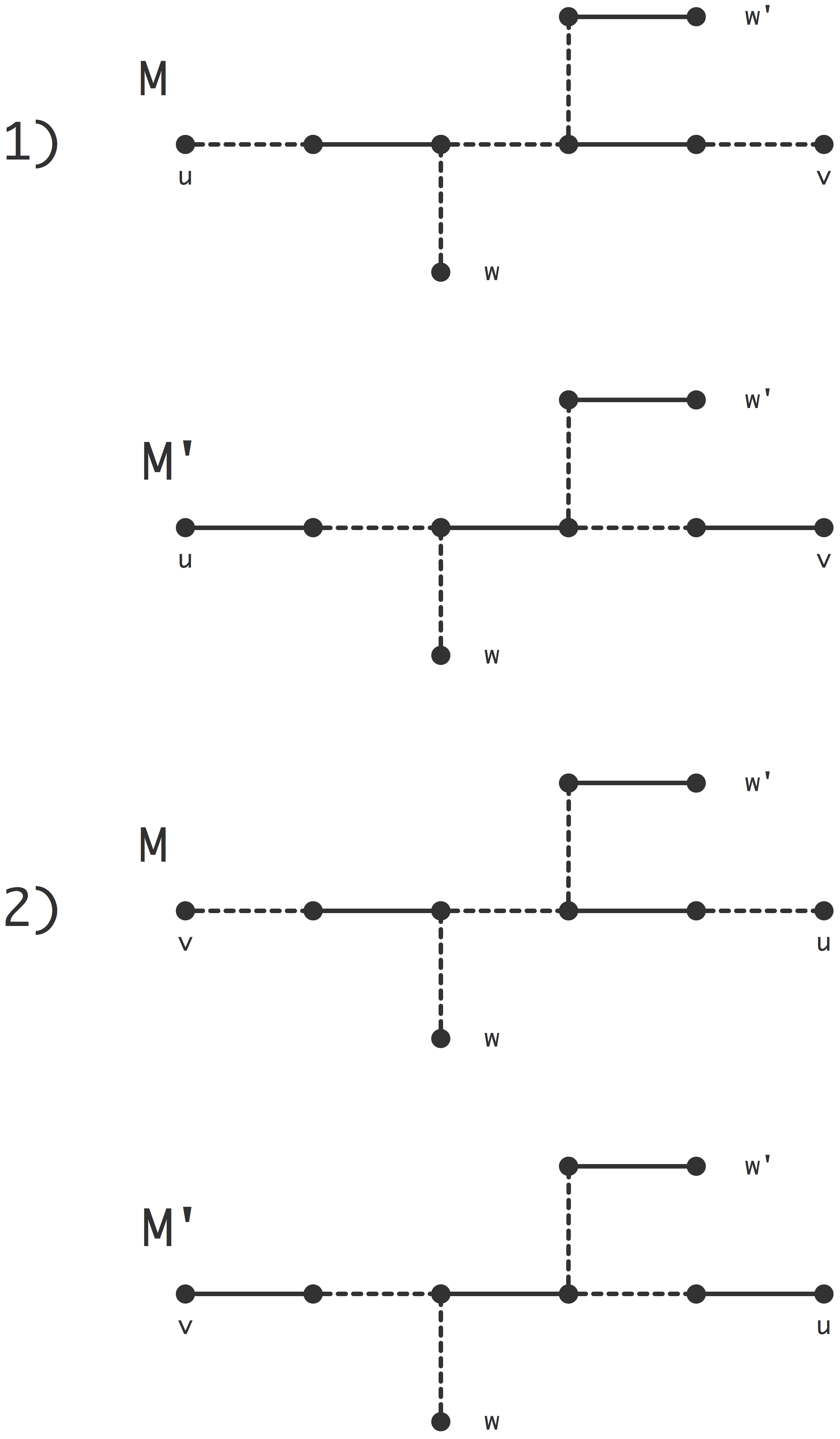}
\caption{\textit{Construction used in the proof of Lemma~\ref{lem:no-aug-incr-paths}}. 
In each case, $M$ is a matching, $P$ is an $M$-augmenting path between 
$u$, a heaviest $M$-unmatched vertex and $v$, a heaviest unmatched vertex reachable 
by an $M$-alternating path from $u$,
and $M' = M \oplus P$. 
Matched edges are drawn as solid edges, and unmatched edges are drawn as dashed edges. 
If there is no $M$-increasing  path in $G$, 
then there is no $M'$-increasing path as well. 
}
\label{Fig:no-incr-paths}
\end{figure}

\begin{lemma} 
\label{lem:no-aug-incr-paths}
Let $z$ be an unmatched vertex with respect to a matching $M$ in a graph $G=(V,E)$, and let 
$\phi: V \mapsto R_{\geq 0}$ be a weight function on the vertex set $V$. 
Suppose that there does not exist an $M$-augmenting path from the vertex $z$, 
and also that there is no $M$-increasing path (from any vertex) in the graph $G$. 
Let $P$ be an $M$-augmenting path from a heaviest unmatched vertex $u$, 
whose other endpoint $v$ is a heaviest unmatched vertex that can be reached from $u$ 
by an $M$-alternating path. 
If $M' = M \oplus P$, then there does not exist an $M'$-augmenting path from the vertex $z$,
nor an $M'$-increasing path (from any vertex) in the graph $G$.  
\end{lemma}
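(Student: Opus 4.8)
The plan is to argue by contradiction, separately ruling out the existence of an $M'$-augmenting path from $z$ and of an $M'$-increasing path anywhere in $G$, and in each case converting the offending $M'$-path into either an $M$-augmenting path from $z$ or an $M$-increasing path in $G$, contradicting the hypotheses. Throughout I will exploit the ``once matched, always matched'' property (Lemma~1 and the surrounding discussion): the only vertices whose matched/unmatched status changes in passing from $M$ to $M'=M\oplus P$ are the two endpoints $u$ and $v$ of $P$, which were $M$-unmatched and are now $M'$-matched. So the $M'$-unmatched vertices are exactly the $M$-unmatched vertices minus $\{u,v\}$; in particular $z\ne u,v$ and $z$ is $M'$-unmatched, and every vertex that is $M'$-matched but $M$-unmatched is one of $u,v$.

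First I would handle augmenting paths. Suppose $R$ is an $M'$-augmenting path starting at $z$. Its endpoints are $M'$-unmatched, hence $M$-unmatched. If $R$ is vertex-disjoint from $P$, then the matched edges of $M'$ along $R$ coincide with those of $M$, so $R$ is already $M$-augmenting from $z$ — contradiction. If $R$ meets $P$, consider the symmetric difference $R\oplus P$ (or equivalently work in $M\oplus(R\oplus P)$... more cleanly: $M'\oplus R = M\oplus P\oplus R$ is a matching with $|M|+2$ edges, so $M\oplus(P\oplus R)$ has two more edges than $M$, meaning $P\oplus R$ — a subgraph of $G$ in which $M$-matched and $M$-unmatched edges alternate — is a disjoint union of $M$-alternating paths/cycles containing exactly two augmenting components, or one augmenting path plus the possibility of shorter pieces). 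Because all four endpoints $z$, the other end of $R$, $u$, $v$ are $M$-unmatched, a standard splicing/untangling argument produces from $P\cup R$ two vertex-disjoint $M$-augmenting paths; tracing the endpoints, one of these two paths has $z$ as an endpoint. That yields an $M$-augmenting path from $z$, contradicting the hypothesis. (I expect the bookkeeping of ``which of the two untangled augmenting paths contains $z$'' to be the fiddly part here, handled by a short case analysis on how $R$ first and last touches $P$.)

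Next, increasing paths, which is the main obstacle. Suppose $R$ is an $M'$-increasing path: an $M'$-alternating path of even length with $M'$-unmatched endpoint $a$ and $M'$-matched endpoint $b$, and $\phi(a)>\phi(b)$. Since $a$ is $M'$-unmatched it is $M$-unmatched and $a\ne u,v$. If $R$ avoids $P$ entirely, then $R$ is also $M$-alternating with the same endpoint statuses (the $M'$-matched endpoint $b$ is also $M$-matched, being $\ne u,v$), so $R$ is an $M$-increasing path — contradiction. The substantive case is when $R$ meets $P$. If $b$ is still $M$-matched (i.e. $b\notin\{u,v\}$), then $R\cup P$ is an $M$-alternating structure with endpoints $a$ ($M$-unmatched), $b$ ($M$-matched), $u$, $v$ ($M$-unmatched); untangling as before gives an $M$-augmenting path together with an $M$-alternating path, or two $M$-alternating paths, and one of these is a reversing path with $M$-unmatched end $a$ and $M$-matched end $b'$ — and we must check $\phi(a)>\phi(b')$ to get an $M$-increasing path. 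The remaining case is $b\in\{u,v\}$, so the $M'$-matched endpoint of $R$ is $u$ or $v$; here we use the maximality built into $P$: $v$ is a heaviest $M$-unmatched vertex reachable from $u$ by an $M$-alternating path, and $u$ is a heaviest $M$-unmatched vertex overall. If $b=v$: concatenating $R$ with the relevant portion of $P$ gives an $M$-alternating walk from $a$ (an $M$-unmatched vertex) reaching $v$ (and reaching $u$), and since $\phi(a)>\phi(b)=\phi(v)$, the maximality of $v$ among $M$-unmatched vertices reachable from $u$ is violated — unless $a$ is not reachable from $u$, in which case $a$ itself, being $M$-unmatched and heavier than $v$, forces an $M$-increasing path from $a$ along the $M$-alternating path back to $v$ (reversing the matched edges of $P$ between $u$ and $v$), contradiction. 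If $b=u$: then $\phi(a)>\phi(u)$, directly contradicting that $u$ is a heaviest $M$-unmatched vertex. In every case we reach a contradiction, completing the proof.

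The crux, then, is the two untangling lemmas (two $M$-unmatched-ended alternating structures sharing vertices can be re-split, preserving endpoint types and not decreasing the relevant weights) together with the observation that the endpoints $u$, $v$ of $P$ — the only vertices changing status — are precisely the ones pinned down by the ``heaviest reachable'' choices in the exact algorithm, so any increasing behavior through them contradicts that maximality. I would write the untangling argument once, for a union of two even/odd $M$-alternating paths with all endpoints $M$-unmatched, and invoke it in both halves.
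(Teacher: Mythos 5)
Your overall strategy is the same as the paper's: pass to the symmetric difference, untangle the union of $P$ with the offending $M'$-path into $M$-alternating pieces, and use the two maximality choices (that $u$ is a heaviest $M$-unmatched vertex, and that $v$ is a heaviest $M$-unmatched vertex reachable from $u$) to kill any putative increasing path. The augmenting-path half is standard and the paper simply cites it, so your sketch there is acceptable. The problem is in the increasing-path half, in exactly the case you identify as substantive.

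When the $M'$-increasing path $R$ (from $M'$-unmatched $a$ to $M'$-matched $b$, with $b\notin\{u,v\}$) meets $P$, you assert that untangling yields ``a reversing path with $M$-unmatched end $a$ and $M$-matched end $b'$'' and then write ``we must check $\phi(a)>\phi(b')$'' without doing so. Both halves of that sentence are where the lemma actually lives, and the first half is not even correct in general. Counting which vertices change matched status between $M$ and $M'\oplus R$ ($u$, $v$, $a$ become matched; $b$ becomes unmatched, with a net gain of one edge), the union decomposes into one $M$-augmenting path joining two of $\{u,v,a\}$ and one $M$-reversing path joining the remaining one of $\{u,v,a\}$ to $b$. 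Only in one of the three resulting configurations is the reversing path's unmatched end equal to $a$ (and there the hypothesis gives $\phi(a)\le\phi(b)$ immediately). In the other two configurations the reversing path runs from $u$ to $b$ or from $v$ to $b$, and the contradiction requires the chain $\phi(a)\le\phi(u)\le\phi(b)$ (using that $u$ is a heaviest unmatched vertex and that no $M$-increasing path exists) or $\phi(a)\le\phi(v)\le\phi(b)$ (using that $a$ is then the endpoint of an $M$-augmenting path from $u$, so the choice of $v$ bounds $\phi(a)$). This inequality chain is the content of the paper's two cases and it is absent from your write-up; you gesture at the right ingredients in your closing paragraph but never assemble them for this case. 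Your $b=v$ subcase is also shaky: you speak of ``an $M$-increasing path from $a$ back to $v$,'' but $v$ is $M$-unmatched, so no such object exists; that subcase is in fact absorbed by the same three-way decomposition once you allow $b\in\{u,v\}$ as degenerate endpoints of the reversing piece.
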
 

\begin{proof}
When $P$ is an augmenting path from 
some $M$-unmatched vertex $u$, clearly $u$ has to be distinct from  
the vertex $z$ since from the latter, there is no augmenting path by the condition of the Lemma.  
A proof that there is no $M'$-augmenting path from $z$ can be found in \cite{papadimitriou1}. 
Hence we prove that there is no $M'$-increasing path in $G$. 
(Similar arguments will be made several times in this paper.) 

If there is no $M'$-reversing  path in $G$, then there cannot be any $M'$-increasing path, and we are done.
Hence choose an arbitrary $M'$-reversing path $P'$ that joins an $M'$-unmatched vertex $w$ and 
an $M'$-matched vertex $w'$.
Since every vertex on the $M$-augmenting path $P$ is matched in $M'$, 
the vertex $w$ cannot belong to $P$, while the vertex $w'$ can belong to $P$ and does not need to be 
distinct from the vertices $u$ or  $v$. 
We will prove that $\phi(w) \leq \phi(w')$ and hence that the  path $P'$ is not $M'$-increasing. 

If an $M$-reversing path also  joins the vertices $w$ and $w'$, where $w$ is  $M$-unmatched and $w'$ 
is $M$-matched, then since there is no $M$-increasing path in $G$, we have $\phi(w) \leq \phi(w')$. 
If no $M$-reversing path joins $w$ and $w'$, then the paths $P'$ and $P$ cannot be vertex-disjoint;
for if they were, then $P'$ would also be an $M$-reversing path, which we assumed does not exist in $G$.
Thus the paths  $P$ and $P'$ share at least one common vertex, and indeed, as we show now,
it shares a matched edge.
For, every vertex on the path $P$ is $M'$-matched, and hence a vertex in $x$ in $P' \setminus P$ 
that is adjacent to a vertex $y$ in $P$ must have the edge $(x,y)$ as an $M'$-unmatched edge. 
Since $P'$ is an $M'$-alternating path, the next edge on the path $P'$ must be a matched 
edge incident on the vertex $y$, and hence this matched edge is common  to both paths $P'$ and $P$. 
(The paths $P$ and $P'$ could intersect more than once.)

Now we have two cases to consider. 

The cases are illustrated in 
Fig.~\ref{Fig:no-incr-paths}. 
In the first case, there is an $M$-augmenting path between $u$ and $w$,
and there are two subcases: either $v$ and $w'$ are the same vertex,
or there is an $M$-reversing path $Q$ between $v$ and $w'$. 
The second subcase  corresponds to Subfigure (1). 
Now the path $Q$ cannot be an $M$-increasing path by our assumption that no such path
exists in $G$. 
Hence in both subcases, we can write $\phi(v) \leq \phi(w')$. 
Since we chose the path $P$ to begin at $u$ and end at the $M$-unmatched vertex $v$ and not at the 
$M$-unmatched vertex $w$, we have $\phi(w) \leq \phi(v)$. 
Combining the two inequalities, we obtain $\phi(w) \leq \phi(w')$. 

In the second case, there is an $M$-augmenting path between $v$ and $w$,
and again there are two subcases:
either $u$ and $w'$ are the same vertex,  or there is an $M$-reversing path $Q'$ 
between $u$ and $w'$. The second subcase is illustrated in Fig.~\ref{Fig:no-incr-paths} (2). 
As before, the path $Q'$ cannot be $M$-increasing by supposition, and therefore 
$\phi(u) \leq \phi(w')$. 
Since $u$ is a heaviest $M$-unmatched vertex by choice, and $w$ is $M$-unmatched, we have 
$\phi(w) \leq \phi(u)$. 
Combining, we have $\phi(w) \leq \phi(w')$. 
\end{proof}


\begin{theorem}
\label{thm:optimal}
Algorithm MatchD computes an MVM  in a graph $G=(V,E)$ with 
vertex weights given by a function $\phi: V \mapsto R_{\geq 0}$. 
\end{theorem}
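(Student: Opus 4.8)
The plan is to prove correctness by maintaining a loop invariant over the iterations of the \textbf{while} loop of \textsc{MatchD}, with the substantive work at each augmentation delegated to Lemma~\ref{lem:no-aug-incr-paths}. Write $R = V \setminus Q$ for the set of vertices already removed from $Q$. The invariant I would carry is: (i) every vertex $z \in R$ that is currently $M$-unmatched admits no $M$-augmenting path in $G$; and (ii) there is no $M$-increasing path in $G$. Granting this, the theorem is immediate: at termination $Q = \emptyset$, so (i) gives that no $M$-augmenting path exists from any vertex — hence $M$ has maximum cardinality — while (ii) gives that no $M$-increasing path exists, and the characterization theorem (``$M$ is a maximum-cardinality MVM iff it has neither an augmenting path nor an increasing path'') then says $M$ is an MVM; by the earlier Lemma an MVM may be taken of maximum cardinality, consistent with what the algorithm produces. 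Termination is clear since $Q$ loses at least one vertex per iteration.

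The base case is $M = \emptyset$, $R = \emptyset$: (i) is vacuous and (ii) holds because with no matched edge there is no $M$-reversing path, hence no $M$-increasing path. For the inductive step, let $u = heaviest(Q)$ be the processed vertex. If the augmenting-path search fails, $M$ is unchanged and only $u$ joins $R$; (i) persists for the old members of $R$ and holds for $u$ by the failure, and (ii) is untouched. If the search returns an augmenting path $P$ from $u$ to a heaviest reachable $M$-unmatched vertex $v$ and $M' := M \oplus P$, then $u$ and $v$ become matched and leave $Q$, so (i) need only be re-examined for the vertices $z \in R$ that are $M'$-unmatched. Such a $z$ was in $R$ at the start of the iteration, is $M$-unmatched as well (an augmentation never unmatches a vertex), and lies on $P$ neither as an endpoint ($z \neq u$ since $u$ has just been removed from $Q$; $z \neq v$ else $P$ is an $M$-augmenting path from $z$, contradicting the inductive hypothesis) nor as an interior vertex (interior vertices of $P$ are $M$-matched); by the classical fact packaged in Lemma~\ref{lem:no-aug-incr-paths} (attributed there to \cite{papadimitriou1}), $z$ then has no $M'$-augmenting path, so (i) is restored. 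Property (ii) for $M'$ is exactly the ``no increasing path'' conclusion of Lemma~\ref{lem:no-aug-incr-paths}, whose hypotheses — absence of an $M$-increasing path in $G$, and $P$ chosen from a heaviest $M$-unmatched vertex to a heaviest $M$-unmatched vertex reachable from it — hold by the inductive hypothesis and the selection rules of \textsc{MatchD}; since this part of the lemma needs no distinguished ``witness'' vertex, it applies even if $R$ contains no $M'$-unmatched vertex.

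The step I expect to be the main obstacle is reconciling the hypothesis of Lemma~\ref{lem:no-aug-incr-paths} with what \textsc{MatchD} actually guarantees: the lemma asks for $u$ to be a \emph{globally} heaviest $M$-unmatched vertex, but \textsc{MatchD} only ensures $u = heaviest(Q)$, and $R$ may contain heavier $M$-unmatched vertices (e.g. an isolated heavy vertex processed earlier). I would resolve this by inspecting the lemma's proof: the ``$u$ heaviest'' hypothesis is used in exactly one place, to bound $\phi(w) \le \phi(u)$ for an $M$-unmatched vertex $w$ that is an endpoint of an $M$-augmenting path; by invariant (i) such a $w$ cannot lie in $R$, hence $w \in Q$ and $\phi(w) \le \phi(heaviest(Q)) = \phi(u)$. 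Thus the weaker property supplied by \textsc{MatchD} is enough, and the lemma goes through for $M' = M \oplus P$ (alternatively, one may simply restate the lemma with this weaker hypothesis). I would close by noting that zero-weight vertices require no special handling: the algorithm still processes them, which only enforces the absence of augmenting paths recorded in (i), and since all weights are non-negative every augmentation leaves the matching weight non-decreasing.
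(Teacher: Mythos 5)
Your proof is correct and follows essentially the same route as the paper's: induction over the iterations of \textsc{MatchD} with the invariant that processed unmatched vertices admit no augmenting path and that no increasing path exists anywhere in $G$, the inductive step delegated to Lemma~\ref{lem:no-aug-incr-paths}, and the conclusion drawn from the augmenting/increasing-path characterization of maximum vertex-weighted matchings. Your observation that the algorithm only supplies $u = heaviest(Q)$ rather than a globally heaviest $M$-unmatched vertex --- together with your repair via invariant (i), which forces any $M$-unmatched endpoint of an $M$-augmenting path to lie in $Q$ and hence to weigh at most $\phi(u)$ --- addresses a point the paper's proof passes over silently, and is a genuine tightening of the argument.
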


\begin{proof}
Let $M$ be the matching computed by Algorithm MatchD. 
We show by induction that there does not exist an $M$-augmenting path nor an $M$-increasing path 
in the graph $G$. 

Let $n_a$ be the number of augmenting operations in the Algorithm MatchD.  
The matching $M$ is the last in a sequence of matchings $M_i$, for $i=0$, $1$, $\ldots$, $n_a$,
computed by the algorithm. 
For $0 \leq i < n_a$, let $P_i$ denote the $M_i$-augmenting path used to augment $M_i$ to the 
matching $M_{i+1}$, and let $u_i$ denote the source of the augmenting path (the $M_i$-unmatched 
vertex from which we searched for an augmenting path), and let $v_i$ denote its other end point. 
The induction is on the matching $M_i$, and the inductive claim is that 
\newline (1) there is no $M_i$-augmenting path from an unmatched  vertex that has already been 
processed,  i.e., a vertex from which we have searched for an augmenting path 
earlier and have failed to find one, and 
\newline (2) there is no $M_i$-increasing path  from any vertex in  $G$. 

The basis of the induction is $i=0$, when the result is trivially true. The first condition holds 
because no vertices have been processed yet, and the second condition holds since the matching is
empty and hence there is no  increasing path. 
Hence assume that the claim is true for some $i$, with $0 \leq  i < n_a$. 
Now the result holds for the step $i+1$ by applying Lemma~\ref{lem:no-aug-incr-paths}.
\end{proof}

The time complexity of this algorithm is $O(nm + n \log n)$.
We seek to match each vertex, and the search for augmenting paths from each vertex costs 
$O(m)$ time. The second term is the cost of sorting the vertex weights. 
 
Additionally, we can describe an exact algorithm for MVM that takes the speculative approach.
Here one computes first a maximum cardinality matching, and then 
searches for increasing paths from unmatched vertices, in decreasing order of 
weights, to obtain an MVM. 
We need additional results to show that this algorithm computes an MVM.
The time complexity of the algorithm is the same as the one using the direct approach
described in this Section. 
Practically, the performance of the two classes of algorithms could be quite different,
and hence it is worthwhile to implement these algorithms.
However, since our interest in this paper is on a $2/3$-approximation algorithm for MVM in
bipartite graphs, we do not discuss this further here.

\section{A $2/3$-Approximation Algorithm for MVM in Bipartite Graphs}
\label{sec:sectionTwoThirdApprox}

In this Section, we restrict ourselves to bipartite graphs. 
In order to solve the MVM on a bipartite graph $G=(S,T, E, \Phi)$, we create two 
`one-side weighted'  subproblems from the given problem.
In the first subproblem, the weights on the $T$ vertices are set to zero,
and in the second subproblem, the weights on the $S$ vertices are set to zero. 
We compute   MVMs on the two subproblems, and then combine them,
using the Mendelsohn-Dulmage theorem, to obtain a solution of the original problem.  
In this section, we describe the algorithm, and compute its time complexity. We defer the proof of correctness of the algorithm to the next section, since it is somewhat lengthy.

\begin{Theorem}[Mendelsohn-Dulmage]~\cite{Mendelsohn+:theorem}
Let $G= (S, T, E)$ be a bipartite graph, and let $M_1$ and $M_2$ be two matchings in $G$.
Then there is a matching $M \subseteq M_1  \cup M_2$ such that all $M_1$-matched
vertices in $S$ are matched in $M$, and all $M_2$-matched vertices in $T$ are also matched in $M$. 
\end{Theorem} 

The matching $M$ is obtained by a case analysis that considers the symmetric difference 
of $M_1$ and $M_2$, and a proof is included in Section $5.4$ of Lawler~(\cite{lawler1}).

\subsection{The Approximation Algorithm} 

The Approximation Algorithm (displayed in Algorithm~\ref{VWMAB3})
calls a  Restricted Bipartite Matching algorithm
(in turn displayed in Algorithm~\ref{Restr-Bipart-Approx})
which solves a one-side weighted MVM in a bipartite graph. 
The latter algorithm matches  unmatched vertices (in the weighted vertex part)
in decreasing order of weights.
From each unmatched vertex $u$, the algorithm searches for  an  unmatched vertex 
(it is  unweighted) by a shortest  augmenting path of \emph{length at most three}. 
If it finds a short augmenting path, then the matching is augmented by the path;
if it fails to find such a path, then we do not consider the vertex $u$ again in the algorithm. 

After solving the two Restricted Bipartite Matching problems, the algorithm  invokes the 
Mendelsohn-Dulmage theorem to compute a  final matching in which 
the matched vertices from the weighted part of each problem  are included.
We will prove that this algorithm computes a $2/3$-approximation to the 
MVM, and that it can be implemented in $O(n \log  n + m )$ time.

\begin{algorithm}
\centering
\caption{\textbf{Input:} A bipartite graph $G$ with weights $\phi$ on the vertices. 
\textbf{Output:} A matching $M$. 
\textbf{Effect:} Computes a $\frac{2}{3}$-approximation to a maximum vertex-weighted matching.}\label{VWMAB3}
\begin{algorithmic}[1]
\Procedure{\textsc{Bipartite-TwoThird-Approx}}{$G=(S,T,E), \phi:S \cup T  \rightarrow \mathbf{R}_{\geq 0}$.}
	\State $M_{S} \gets \textsc{Restricted-Bipartite-Match}(G, S, \phi(S));$ 
	\State $M_{T} \gets \textsc{Restricted-Bipartite-Match}(G, T, \phi(T));$ 
	\State$M \gets $\textsc{MendelsohnDulmage}$(M_{S},M_{T},M);$ 
\EndProcedure
\end{algorithmic}
\label{alg:approx}
\end{algorithm}

\begin{algorithm}
\centering
\caption{\textbf{Input:} A bipartite graph $G$ with weights $\phi(S)$ only on one vertex part $S$. 
\textbf{Output:} A matching $M_S$. 
\textbf{Effect:} Computes a $\frac{2}{3}$-approx to a maximum vertex-weight matching in a bipartite 
graph.}
\label{Restr-Bipart-Approx}
\begin{algorithmic}[1]
\Procedure{\textsc{Restricted-Bipartite-Match}}{$G, S, \phi(S)$} 
	\State $M_{S} \gets \phi;$ 
	\State $Q \gets S;$
	\While {$ Q  \neq \emptyset$}  \Comment{Compute $M_{S}$}
		\State $u \gets heaviest(Q);$ 
		\State $Q \gets Q - u;$ 
		\State Find a shortest augmenting path $P$ of length at most $3$ starting at $u;$ 
		\If{$P$ $\mathit{found}$} 
			\State $M_{S} \gets M_{S} \oplus P;$
		\EndIf
	\EndWhile 
\EndProcedure
\end{algorithmic}
\end{algorithm}


\subsection{Time Complexity of the $2/3$-Approximation Algorithm}

\begin{Theorem} 
The $2/3$-approximation  algorithm has time complexity $O(n \log n \, + \, m)$, where $n$ is the maximum of $|S|$ and $|T|$,
and $m$ is the number of edges in the  bipartite graph $(S, T, E)$. 
\end{Theorem} 

\begin{proof}
We will establish the time complexity for the restricted bipartite graph with nonzero weights 
on the $S$ vertices. 
An identical  result holds for the graph with nonzero weights on the  $T$ vertices.
The cost of computing the final matching via the Mendelsohn-Dulmage theorem is $O(n)$,
since it needs to work with only the symmetric difference of the two matchings. 
The $n \log n$ complexity comes from the sorting of the weights on the vertices in decreasing 
order.

In each iteration of the {\bf while} loop, we choose an unmatched vertex $s$, and examine all neighbors 
of $s$. If we find an unmatched vertex $t$, then we can match the edge $(s,t)$ and we proceed 
to the next iteration. In this case, when an augmenting path of length one suffices to match $s$,
the time complexity is proportional to the degree of $s$, and hence summed over all 
unmatched vertices this is $O(m)$.  

Now we consider alternating paths of length $3$ (edges) from $s$; 
we search for an augmenting path among these.  Denote the number of such paths from $s$ by $L$. 
Let us denote a generic alternating  path of length three by $s$, $t_i$, $s_i$, $t_{L+i}$, 
for $i= 1$, $2$, $\ldots$, $L$. 
Furthermore, suppose one of these paths, $s$, $t_j$, $s_j$, $t_{L+j}$ is an augmenting path. 
After augmentation, we have the two matched edges $(s,t_j)$ and $(s_j, t_{L+j})$ 
and the unmatched edge $(t_j, s_j)$.

The cost of examining the neighbors of the unmatched vertices $s$ is clearly $O(m)$. 
Once we reach a matched neighbor $t_i$ of $s$, then we take the matched edge $(t_i, s_i)$,
and then search the neighbors of $s_i$ for an unmatched neighbor. 
Consider the neighbors of  the vertex $s_i$. 
If we find an unmatched neighbor $t_{L+i}$, then  we have an augmenting path, 
we match the edge $(s_i, t_{L+i})$, and we end the search.  
Once a neighbor of $s_i$ is matched, since it 
stays matched for the rest of the algorithm, we need not examine it again. 
If we find a matched neighbor of $s_i$,  then it cannot lead to an augmenting path of 
length three, and we can examine the next neighbor in the adjacency list. 
At each step, we can maintain a pointer to the first  unexamined neighbor of $s_i$ 
in the adjacency list of $s_i$ in the algorithm, and continue the search for an 
unmatched neighbor of $s_i$ from that vertex. 
This means that we go through the adjacency list of any  matched vertex in $S$  at most once,
and thus the cost of searching these vertices at a distance two edges from unmatched vertices in  $S$ 
is $O(m)$.

This completes the proof.
\end{proof}

\section{Correctness of the $2/3$-Approximation Algorithm}
\label{sec:sectionCorrectness-TwoThirdsAlg}

%

\begin{figure}[!thp]
\centering
\includegraphics[scale=0.08]{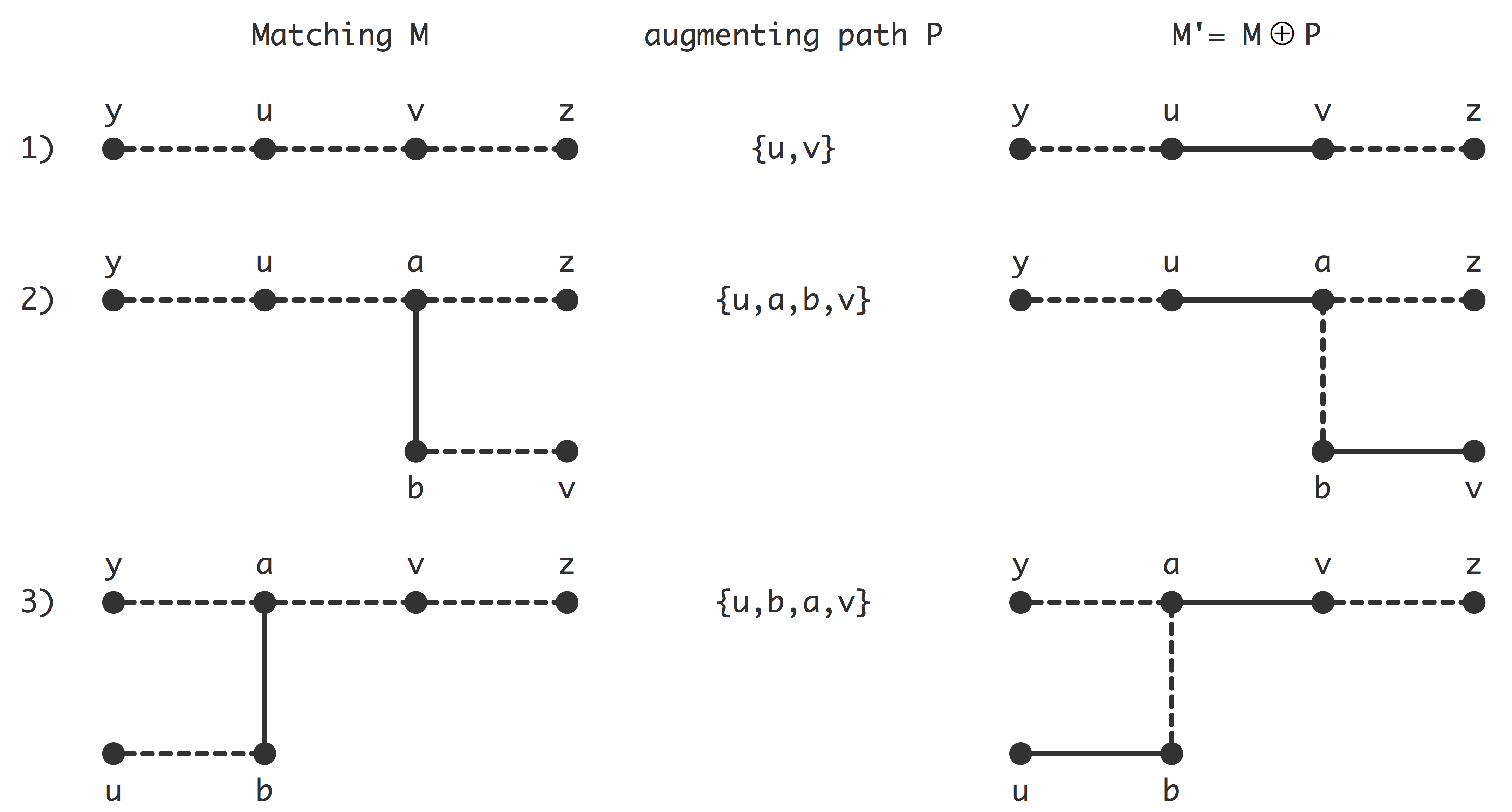}
\caption{ \textit{The three cases for augmenting paths of length one or three.} 
In each case, $M$ is a matching, $z$ is an $M$-unmatched vertex belonging to the vertex set $S$, 
$P$ is an $M$-augmenting path that joins $u$ (a heaviest $M$-unmatched vertex from $S$)
and $v$  (an arbitrary vertex in $T$ that can be reached by an $M$-augmenting path of length 
one or three from $u$). The augmented matching $M' = M \oplus P$. 
Matched edges are drawn as solid edges, and unmatched edges are drawn as dashed edges. 
In all three cases, the presence of an $M'$-augmenting path of length three from the vertex $z$ 
implies the presence of an $M$-augmenting path of length one or three from $z$ as well. 
}
\label{Fig:augpaths-lengththree}
\end{figure}

\begin{figure}[!thp]
\centering
\includegraphics[scale=0.08]{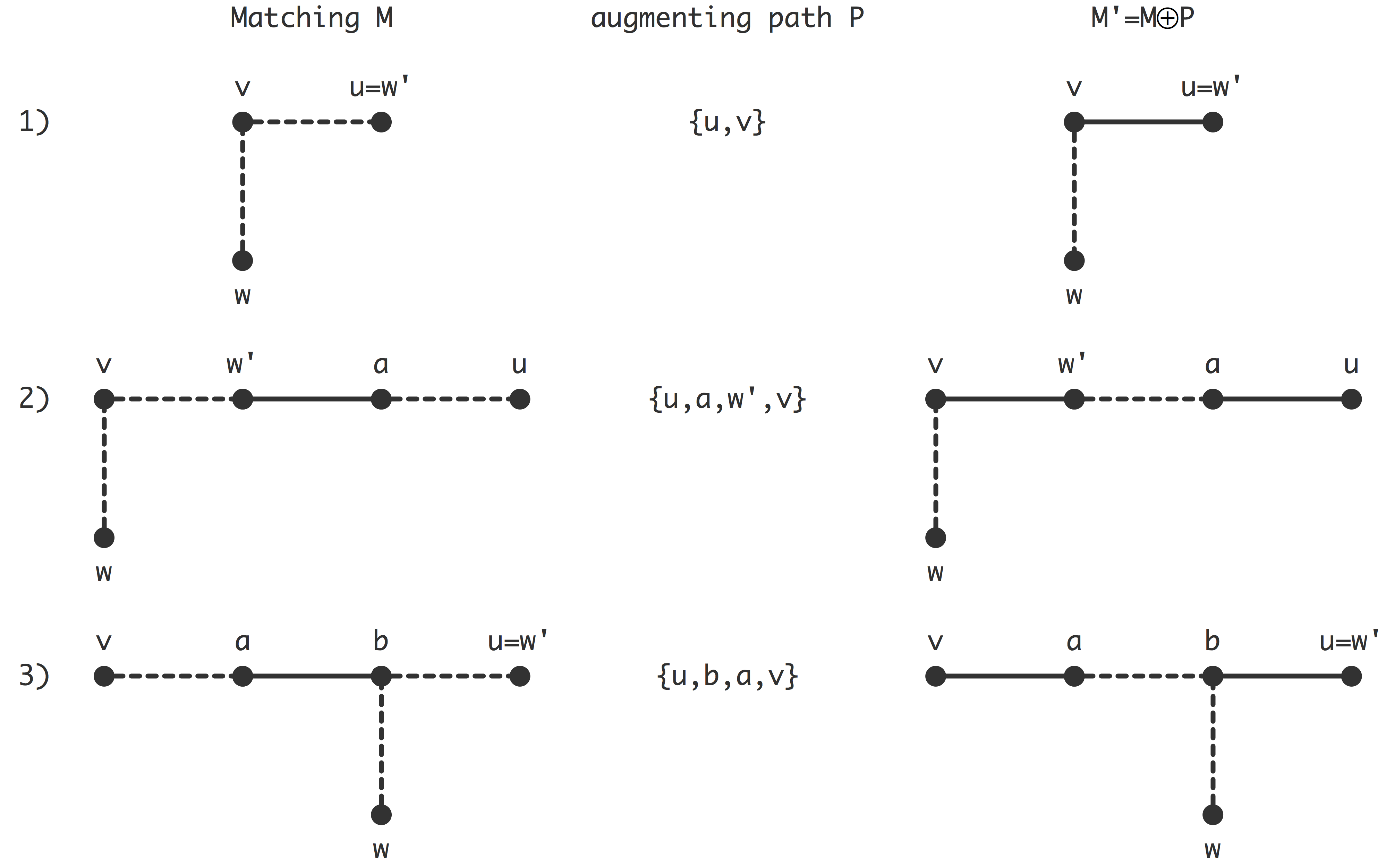}
\caption{\textit{The three cases for increasing paths of length two.} 
In each case, $M$ is a matching, and 
$P$ is an $M$-augmenting path that joins $u$ (a heaviest $M$-unmatched vertex from $S$)
and $v$  (an arbitrary vertex in $T$ that can be reached by an $M$-augmenting path of length 
one or three from $u$). The augmented matching $M' = M \oplus P$. 
Matched edges are drawn as solid edges, and unmatched edges are drawn as dashed edges. 
In all three cases, the absence  of an $M'$-increasing path of length two joining 
the vertices $w$ and $w'$ implies the absence  of an $M$-increasing path of length two as well. 
}
\label{Fig:incrpaths-lengthtwo}
\end{figure}

The following is the result we wish to prove in this Section. 
\begin{Theorem} 
\label{thm:two-thirdsapprox}
Let $G = (S, T, E)$ be a bipartite graph and $\phi: S \cup T \mapsto R_{\geq 0}$ a weight function.
Then Algorithm~\ref{alg:approx} computes a $2/3$-approximation for the MVM problem. 
\end{Theorem}
This is technically the most demanding section in this paper, and 
the reader could skip it in a first reading of the paper without loss of understanding. 
In order to prove this result, we need two supplementary results. 

\begin{Lemma}
\label{lem:short-aug-incr-paths}
Let $G = (S, T, E)$ be a bipartite graph,  
and $\phi: S \cup T \mapsto R_{\geq 0}$ be a weight function such
that $\phi(t) = 0$ for every vertex $t \in T$. 
Let $M$ be a matching in $G$, and $z \in S$ an $M$-unmatched vertex.
Suppose that (i) there is no $M$-augmenting path of length one or three from the vertex $z$, 
and  that (ii) there is no $M$-increasing path of length two in $G$.
Let $P$ denote an $M$-augmenting path of length at most  three 
with one end point $ u \in S$, a heaviest $M$-unmatched vertex,  and 
let the other endpoint of $P$ be $v \in T$. 
If  $M' = M \oplus P$ denotes  the augmented matching,  
then (i) there is no $M'$-augmenting path of length one or three from the vertex $z$,
and (ii) there is no $M'$-increasing path of length two  in $G$.  
\end{Lemma}

\begin{proof}

We will first consider the case of augmenting paths from the vertex $z$ 
of the specified length, and then consider increasing paths of length two. 

Since there is no $M$-augmenting path of length one from the $M$-unmatched vertex $z$, 
all neighbors of $z$ are matched under $M$. 
Since $P$ is an augmenting path, all vertices matched in $M$ continue to be matched in 
the matching $M' = M \oplus P$, and thus there cannot be an $M'$-unmatched edge incident on $z$,
and thus no $M'$-augmenting path of length one from $z$. 

Now suppose that there exists an $M'$-augmenting path of length three from the vertex $z$. 
Since no such path exists with respect to the matching $M$, the augmenting path $P$ 
must have some vertex adjacent to the vertex $z$. 
There are three possible cases  for the $M$-augmenting path $P$ of length one or three 
that joins a heaviest $M$-unmatched vertex $u$ to some vertex $v \in T$. 
The three cases are illustrated in Figure~\ref{Fig:augpaths-lengththree}. 

In the first case, $P= \{u, v\}$, in the second case $P= \{u, a, b, v\}$, and 
in the third case, $P=\{u, b, a, v\}$. 
In all three cases we can see that the existence of an $M'$-augmenting path of length one or three
from $z$ implies the existence of an $M$-augmenting path of length one or three from $z$ as well.
This contradiction proves the result regarding short augmenting paths. 

We turn to increasing paths of length two. 
Again, we suppose that there is an $M'$-increasing path of length two joining two vertices 
belonging to $S$ denoted by  $w$ and $w'$.
There are three cases to consider as illustrated in Figure~\ref{Fig:incrpaths-lengthtwo}. 

In the first case, since $u$ and $w$ are both $M$-unmatched, by choice of $u$ as a heaviest 
unmatched vertex, we have $\phi(w) \leq \phi(u) = \phi(w')$.
Hence the path $\{w, v, u=w'\}$ cannot be $M'$-increasing. 

In the second case, we have $\phi(w') \geq \phi(u)$ since there is no $M$-increasing path 
of length two. Since both vertices $u$ and $w$ are $M$-unmatched, we have $\phi(u) \geq \phi(w)$.
Combining the two inequalities, we have $\phi(w') \geq \phi(u) \geq \phi(w)$, and again
the path $\{w, v, w'\}$ is not an $M'$-increasing path. 

In the third case, we have $\phi(w) \leq \phi(u) = \phi(w')$ since both vertices  
$w$ and $u$ are $M$-unmatched. 
Then again, the path $\{w, b, u=w'\}$ cannot be $M'$-increasing. 

The contradictions obtained in all three cases complete the proof for short increasing paths. 
\end{proof}

To compare the weight of a maximum vertex-weighted matching $\overline{M}$ with another matching
$M$, we consider the symmetric difference of these two matchings. 
The subgraph induced by these two matchings 
consists of cycles and paths. 
Each cycle in this subgraph has all of its vertices matched in both matchings, 
so these do not contribute to the difference in their weights. 
Consider a path in this subgraph that begins with a vertex $u$ that is matched in 
the optimal matching $\overline{M}$ but not the suboptimal matching $M$. 
Here we choose $u$ to be a vertex that is weighted in the restricted bipartite matching problem.
If the path  has odd length, 
then it ends in a vertex $w$ also matched in $\overline{M}$ but not in $M$.
The vertex $w$ belongs to the unweighted vertex part. 
We call the  vertex $u$ a  {\em failure\/}, for the suboptimal algorithm failed to match it,
while the optimal algorithm succeeds in matching it,
and since  $u$ is responsible for the lower  weight of the the suboptimal matching $M$.   

In the subgraph considered above, 
we cannot have a path of odd length with both of its terminal vertices belonging to $M$ but  
not $\overline{M}$, for we could use such a  path to augment  the optimal matching $\overline{M}$. 
If a path beginning  with the vertex $u$ matched in $\overline{M}$ but not $M$ 
has even length, then it ends in a vertex $w$ matched in $M$ but not $\overline{M}$.  
If $\phi(u) >  \phi(w)$, then this path contributes to
a lower weight for the suboptimal matching $M$.  The approximation algorithm we 
have described does not permit the existence of increasing paths, 
and so we do not need to consider this here. 
We also cannot have $\phi(u) < \phi(w)$, for then we would have an $\overline{M}$-increasing path, 
contradicting the optimality of $\overline{M}$. 

We now focus on the vertices we have called failures. 
The idea is to show that failures are light and rare relative to 
other vertices matched in the suboptimal matching,
so that we can compensate for the failures through these vertices. 
For every failure, if we have a sufficiently large set of  compensating vertices $C(u)$,
and these sets of vertices are disjoint, then we can establish an approximation ratio 
for the suboptimal matching. 

\begin{Lemma}
\label{lem:general-approximation}
Let $G= (V,E)$ be a graph, $\phi: V \rightarrow R_{\geq 0}$ be a weight function, 
$\overline{M}$ an MVM in $G$,  $M$ any other matching,
and $h$ a positive integer. 
If for every failure $u$, there is a vertex-disjoint set of $h+1$ $M$-matched vertices $C(u)$ 
such that 
$\phi(u) \leq \phi(u')$ for all $u' \in C(u)$, 
then the matching $M$ is an $(h+1)/(h+2)$-approximate solution for the MVM problem. 
\end{Lemma}

\begin{proof}
Enumerate the failures as $u^j$, $=1$, $\ldots$, $n_f$,
and the set of compensating vertices for $u^j$ as $u^j_g$, for $g= 1$, $\ldots$, $h+1$. 
We can assume that all the compensating vertices are matched in $M$ and $\overline{M}$, 
which corresponds to the worst-case scenario for the approximation ratio. 

We consider the inequalities  that state that failures are light relative to their compensating vertices,  $\phi(u^j) \leq \phi(u^j_g)$  
for $g=1$, $\ldots$, $h+1$, and sum them over $g$,  to  obtain
\begin{displaymath}
(h+1) \phi(u^j)  \leq \sum_{g=1}^{(h+1)} \phi(u^j_g).
\end{displaymath}
We add $(h+1) \sum_{g=1}^{(h+1)} \phi(u^j_g)$ to both sides to obtain 
\begin{displaymath}
(h+1) \left[ \phi(u^j) + \sum_{g=1}^{h+1} \phi(u^j_g) \right]   
\leq (h+2) \sum_{g=1}^{(h+1)} \phi(u^j_g).
\end{displaymath}
Note that the left-hand-side of the inequality counts the weight of some of the matched vertices
in the optimal matching $\overline{M}$, 
and the right-hand-side counts the weight of some of the matched vertices in 
the suboptimal matching $M$. 
We sum this last inequality over all failures:
\begin{displaymath}
(h+1) \sum_{j=1}^{n_f} \left[ \phi(u^j) + \sum_{g=1}^{h+1} \phi(u^j_g)  \right] 
\leq (h+2) \sum_{j=1}^{n_f} \sum_{g=1}^{(h+1)} \phi(u^j_g).
\end{displaymath}
The vertices not included on either side of this inequality are vertices that are matched
in both matchings. We add $(h+1)$ times the sum of the weights of these latter vertices
to the left-hand-side and $(h+2)$ times this sum to the right-hand-side of the inequality, 
and obtain 
\begin{displaymath}
(h+1) \phi(\overline{M}) \leq (h+2) \phi(M).
\end{displaymath}
Rearranging, we find 
\begin{displaymath}
\phi(M) / \phi(\overline{M}) \geq (h+1)/(h+2).
\end{displaymath}

\end{proof}

We need to make an argument to charge the weight of a failed vertex 
to the set of compensating vertices, since these vertices are 
found from an alternating path constructed from the optimal matching and 
the current matching in the approximation algorithm. 
As the latter matching changes, the vertices on the alternating path from a  
failed vertex can change as well. 
The vertices  on the alternating path for the failure at a current step in the approximation 
algorithm might already have been charged for earlier failures, and  
hence we need a careful counting argument to find the set of compensating vertices
to charge for a failure.

{\bf Proof of Theorem~\ref{thm:two-thirdsapprox} }
\begin{proof}
Recall that we solve the problem by solving two separate matching problems, one 
with  weights only on the vertices in $S$ and the second, with weights only on the vertices in 
$T$. Using the Mendelsohn-Dulmage Theorem then 
we   combine the two matchings to find a matching that matches
all the matched vertices in $S$ in the first matching, and all the matched vertices in $T$
from the second matching. 

Let us consider the matching problems with weights on the vertex set $S$.
Let $M_S$ be the matching computed by the Approximation algorithm,
and $\overline{M}_S$ be a matching of maximum vertex weight.
We consider failures, i.e., vertices in $S$ that are matched in the optimal matching
but not in the approximate matching. 
We will show that every failure is compensated by \emph{two} vertices in $S$ that are matched 
by $M_S$ and are also heavier than the failed vertex. 
These sets of compensating vertices are vertex-disjoint, and this leads to the 
Two-third approximation. 

The Approximation algorithm considers vertices to match by non-increasing order of 
weights. If a short augmenting path (of length one or three) is found from an  unmatched vertex $u$,
then the algorithm augments the matching, and $u$ is matched.
If the Algorithm fails to find a short  augmenting path from $u$,
then it does not search for an augmenting path from the vertex $u$ again.
At the end of this step, we will say that the vertex $u$ has been {\em processed}. 

Let $n_a$ be the number of short augmenting operations in the Approximation algorithm, 
and let the matchings in the sequence of short augmentations be indexed as 
$M_i$, for $i=0$, $\ldots$, $n_a$. 
For $0 \leq i < n_a$, let $P_i$ denote the augmenting path used to augment the matching 
$M_i$ to $M_{i+1}$, and let $u_i$ denote the source of the augmenting path and 
$v_i$ denote its destination. 

First, we induct on the augmentation step $i$ to show that: 
\newline (1) no $M_i$-augmenting path of length one or three exists from any 
vertices that  are $M_i$-unmatched and have been processed prior to this augmentation step.
\newline (2) no $M_i$-increasing path of length two exists in $G$.

The base case is $i= 0$, and these results hold trivially since no vertex has been 
processed yet, and no vertices are matched. 
If the induction hypothesis holds at the beginning of augmentation step $i$, then 
by Lemma~\ref{lem:short-aug-incr-paths}
the hypothesis holds at the beginning of step $i+1$ as well, 
since we match a  currently heaviest unmatched and unprocessed vertex at this step.

When a vertex $u$ is marked as a failure (i.e., as a vertex that has been processed
and is $M$-unmatched), then the length of any augmenting or increasing path is at least four.
We will make use of this fact in a second inductive argument. 

We enumerate the failures in order of their processing time:
the vertex $u^k \in S$ is the $k$-th failure, and $n_f$ denotes the total number of 
failures. 
The second inductive argument is on the number of failures $k$, 
where $1 \leq k \leq n_f$.
Denote the matching at this step by $M_{f,k}$ (the matching associated
with the $k$-th failure). 
At step $k$, we consider {\em all failures\/}  up to this point, including $k$. 
\newline {\bf Claim}:  For every failure $u^j$ with $ 1 \leq j \leq k$, 
there are two $M_{f,k}$-matched vertices in $S$ labeled $u_1^j$ and $u_2^j$ that 
are heavier than $u^j$. Hence 
$\phi(u^j) \leq \phi(u_1^j)$, and $\phi(u^j) \leq \phi(u_2^j)$.

We prove the Claim by induction again. 
The base case of the induction hypothesis is $k=1$. 
Consider the situation  when the vertex $u^1$ is processed and is marked as a failure. 
The current matching is $M_{f,1}$, and we consider the symmetric difference 
$\overline{M}_S \oplus M_{f,1}$. 
The vertex $u^1$ is an endpoint for an alternating  path $P^{1,1}$ in the subgraph 
induced by the edges in the symmetric difference
(the edges belong  alternately to the matching $\overline{M}_S$ and $M_{f,1}$), 
and its length is at least four. 
Denote the vertex at distance two from $u^1$ by $u^1_1$ and 
the vertex at distance four from $u^1$ by $u_2^1$. 
These vertices are matched in $M_{f,1}$ and hence 
were processed earlier than $u^1$, and are hence at least as heavy as $u^1$. 
Thus the induction hypothesis holds for $k=1$.

Assume that the induction hypothesis is true for some $1 \leq k < n_f$, and consider 
the case for $k+1$, when a vertex $u^{k+1}$ is processed  and becomes a failure. 
The current matching is $M_{f, k+1}$, and by forming the symmetric difference
$\overline{M}_S \oplus M_{f, k+1}$ we see that every failure $u^j$ with $j=1$, $\ldots$, 
$k+1$ is an endpoint for an alternating  path $P^{j,k+1}$, whose length is at least four
(by the first inductive argument). 
Denote the vertices at distances two and four from $u^{j}$ by 
$u_1^{j,k+1}$ and $u_2^{j,k+1}$, respectively. 


When the graph $G$ is bipartite (which is the case here), 
we claim that these vertices form distinct pairs.
Consider an alternating  path from a failed vertex $u^j$ when edges are chosen 
from $\overline{M}_S \oplus M_{f, k+1}$. 
The vertex $u^j$ is matched in the optimal matching but not in the current matching 
$M_{f,k+1}$, and it belongs to $S$. 
Every vertex from $T$ reached by this alternating path is reached by an edge that belongs
to the optimal matching, while every  vertex in $S$ reached by this path 
(other than $u^j$) is reached by an edge that belongs to the current  matching. 
Hence it is clear that this path cannot reach another failed vertex $u^l$,
since such a vertex is not matched in the current  matching. 
Thus these sets of alternating paths from the failed vertices are vertex-disjoint. 

Define $A = \cup_{j=1}^{k+1} \{u_1^{j,k+1}, u_2^{j,k+1}\}$, and 
$B = \cup _{j=1}^k \{ u_1^j, u_2^j\}$. 
(The first set consists of vertices we find  from each failure 
using alternating paths from the optimal matching and the current matching.
The second set consists of the vertices that have been charged for the failures
prior to the current step.)
Then $|A| = 2(k+1)$ and $|B| = 2k$ since these elements are distinct. 
The set $B$ is not necessarily contained in the set $A$, and hence $ |A \setminus B | \geq 2$. 
Thus we can \emph{choose} two distinct vertices from the set $A \setminus B$ 
(and matched in $M_{f,k+1}$)  to associate with 
the failure $u^{k+1}$. Denote them by $u_1^{k+1}$ and $u_2^{k+1}$. 
Since these vertices are processed earlier than $u^{k+1}$, they are at least as 
heavy as $u^{k+1}$. Thus the induction hypothesis holds for the  $k+ 1$-st failure also. 

Now we can apply Lemma~\ref{lem:general-approximation} with $h=1$ to obtain the $2/3$-approximation bound. 

This completes the proof. 
\end{proof}

\section{A  $1/2$-Approximation Algorithm  for MVM in Bipartite Graphs}
\label{sec:sectionHalfApprox}

In this section we discuss a Greedy $1/2$-approximation algorithm 
for the vertex-weighted 
matching problem on bipartite graphs. 
Our intent is to compare the $2/3$-approximation algorithm from the previous Section
with this  algorithm, and hence our discussion will be brief. 
Also, the algorithm discussed here could be adapted to non-bipartite graphs
in a straightforward manner. The reason we discuss the  bipartite version here 
is that the specialized algorithm for bipartite graphs is  more efficient practically. 
The  algorithm solves  two Restricted Bipartite Matching 
(one-side weighted)  problems and then invokes  the Dulmage-Mendelsohn theorem as in the $2/3$-approximation algorithm. 

The Greedy $1/2$-approximation algorithm  has only one change
from Algorithm~\ref{Restr-Bipart-Approx}. 
In each iteration of the {\bf while} loop, 
it finds an unmatched neighbor of the currently heaviest unmatched vertex $u$
(an augmenting path of length $1$ instead of $3$).
Recall that since only one vertex set in the bipartite graph is weighted 
in the Restricted Bipartite Matching problem,
it can choose {\em any\/}  unmatched neighbor of  $u$, 
and does not need to look for the heaviest such neighbor. 

The following Lemma and Theorem show that this Greedy algorithm is a 
$1/2$-approximation algorithm for the VWM problem on bipartite graphs. 

\begin{lemma}
\label{lem:shorter-aug-incr-paths}
Let $G= (S, T, E)$ be  a bipartite graph, 
and $\phi: S \cup T \rightarrow R_{\geq 0}$ be a weight function such that 
$\phi(t) = 0$ for every vertex $ t \in T$. 
Let $M$ be a matching in $G$ and $z \in S$ be an $M$-unmatched vertex. 
Suppose that (i) there is no $M$-augmenting path of length one from $z$, and 
(ii) there is no $M$-increasing path of length two in $G$. 
Let $(u,v)$ denote an unmatched edge, where  $ u \in S$ is  
a heaviest $M$-unmatched vertex  and   $v \in T$. 
If  $M' = M \oplus \{(u,v)\}$ denotes  the augmented matching,  
then (i) there is no $M'$-augmenting path of length one from  the vertex $z$,
and (ii) there is no $M'$-increasing path of length two  in $G$.  
\end{lemma}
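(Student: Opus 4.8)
The plan is to mirror the proof of Lemma~\ref{lem:short-aug-incr-paths} almost verbatim, since the present Lemma is just its specialization to augmenting paths of length exactly one (rather than ``length one or three''). I will handle the two assertions in turn: first that no $M'$-augmenting path of length one exists from $z$, and then that no $M'$-increasing path of length two exists in $G$.

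For the first part, I would argue exactly as in Lemma~\ref{lem:short-aug-incr-paths}: since there is no $M$-augmenting path of length one from $z$, every neighbor of $z$ is $M$-matched. Because $M' = M \oplus \{(u,v)\}$ is obtained by a single augmentation, every $M$-matched vertex remains $M'$-matched (the ``once matched, always matched'' property of augmentations, established in the first Lemma of Section~\ref{sec:sectionCharacterizationOptimal}). Hence every neighbor of $z$ is still $M'$-matched, so there is no $M'$-unmatched edge incident on $z$, and therefore no $M'$-augmenting path of length one from $z$.

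For the second part, suppose toward a contradiction that there is an $M'$-increasing path of length two joining vertices $w, w' \in S$, with $w$ the $M'$-unmatched endpoint and $w'$ the $M'$-matched endpoint, so that $\phi(w) > \phi(w')$; write the path as $\{w, t, w'\}$ for some $t \in T$. The only way this path can fail to be $M$-increasing or $M$-reversing already is if the augmenting edge $(u,v)$ intersects it; since $w$ is $M'$-unmatched it is also $M$-unmatched, so $w \notin \{u,v\}$, and the only shared vertex must be $v = w'$ (the $M'$-matched endpoint, now matched precisely by the new edge), whence $u$ is matched by $(u,v)$ in $M'$ and the matched edge of the increasing path is $(u,v)$, i.e.\ $w' = v$ and $t = u$. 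But then $w' = v$ means $\phi(w') = \phi(v) = 0$ since $v \in T$, and the path is $\{w, u, v\}$ with $w \in S$ an $M$-unmatched vertex; by the choice of $u$ as a heaviest $M$-unmatched vertex we get $\phi(w) \le \phi(u)$. I would then split into the cases parallel to Figure~\ref{Fig:incrpaths-lengthtwo} (here there are fewer cases, since $P$ has length one): if the increasing path is disjoint from $(u,v)$ it is already an $M$-increasing path, contradicting (ii); otherwise the analysis above gives $\phi(w)\le\phi(u)=\phi(w')$ (using $\phi(u)=\phi(w')$ when $w'$ coincides with the new match), contradicting $\phi(w)>\phi(w')$.

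The main obstacle, such as it is, is purely bookkeeping: enumerating the ways the single augmenting edge $(u,v)$ can meet a length-two path and checking that in each such configuration the weight inequality $\phi(w) \le \phi(w')$ holds, using only (i), (ii), and the maximality of $\phi(u)$ among $M$-unmatched vertices. There is no genuinely new idea beyond Lemma~\ref{lem:short-aug-incr-paths}; one simply notes that restricting augmenting paths to length one only shrinks the set of cases that must be checked, so the same inequalities go through. I would present the argument compactly by reference to the proof of Lemma~\ref{lem:short-aug-incr-paths}, spelling out only the (simpler) case distinction for the length-one augmenting path.
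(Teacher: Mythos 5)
Your proof is correct and takes exactly the route the paper intends: the paper omits this proof, saying only that it is ``similar to Lemma~\ref{lem:short-aug-incr-paths}'', and you carry out precisely that specialization (all neighbors of $z$ stay matched, so no new length-one augmenting path; a length-two $M'$-increasing path either avoids the new edge $(u,v)$ and was already $M$-increasing, or uses it as its matched edge, in which case the maximality of $\phi(u)$ kills it). One small slip: in the second case you momentarily write $w'=v$ and $t=u$, which would place two $S$-vertices adjacent and give $\phi(w')=0$; the correct identification, which your concluding chain $\phi(w)\le\phi(u)=\phi(w')$ in fact uses, is $w'=u$ and middle vertex $t=v$.
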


The proof of this Lemma is similar to Lemma~\ref{lem:short-aug-incr-paths},
and hence is omitted. 

\begin{Theorem} 
\label{thm:half-approx}
Let $G = (S, T, E)$ be a bipartite graph and $\phi: S \cup T \mapsto R_{\geq 0}$ a weight function.
Then the Greedy algorithm computes a $1/2$-approximation for the MVM problem on $G$. 
\end{Theorem}

The proof is by induction  on the number of augmentations,  using Lemma~\ref{lem:shorter-aug-incr-paths} at each augmenting step, and is similar to the proof of Theorem~\ref{thm:optimal}.  
It is again omitted.

\section{Experiments and Results}
\label{sec:sectionExperiments}

\subsection{Experimental Setup}
For the experiments, we used an
Intel Xeon E5-2660 processor based system (part of the  Purdue University Community Cluster), called \emph{Rice}. 
The machine consists of two processors, each with ten cores running at 2.6 GHz 
(20 cores in total) with $25$ MB unified L3 cache and 64 GB of memory. 
The operating system is Red Hat Enterprise Linux release 6.9. 
All  code was developed using C++ and compiled using
the {\tt g++} compiler (version: $4.4.7$) using the -O3 flag.

 \begin{table}
 {
 \caption{\textit{The set of bipartite graphs which are our test problems.} 
The problems are listed in increasing order of the total number of vertices. }
 \footnotesize
 \centering
 \begin{tabular}[!h]{|l|rrr|rrr|r|}
 \hline
 \textbf{Graph} &$|V_1|$ &\multicolumn{2}{c|}{Degree}&$ |V_2|$&\multicolumn{2}{c|}{Degree} &  $|E|$ \\
               &         & \textbf{Max.} & \textbf{Mean}&    & \textbf{Max.} & \textbf{Mean} 
                                                       & \\
 \hline \hline
Trec10  & 106	&134    &	81.25&	478&	79&	18.02&	8,612 \\
IG5-16  & 18,485 &	990 & 31.83	& 18,846 &	120 &	31.22 &	588,326 \\
fxm3\_16 & 41,340 &	57 &	9.49 &	85,575 &	36 &	4.58 &	392,252 \\
JP & 67,320 &	8,980 &	204.02 &	87,616 &	390 &	156.76 & 13,734,559 \\
flower\_8\_4 & 55,081 & 15 &	6.81 &	125,361 &	3 &	2.99 &	375,266 \\
spal\_004 &	10,203 & 6,029 &	4524.96 &	321,696 &	168 &	143.52 & 46,168,124 \\
pds-50 & 83,060 & 96 &	7.11 &	275,814 &	3 &	2.14 &	590,833 \\
image\_interp &  120,000 &	6 &	5.93 &	240,000 &	5 &	2.97 &	711,683 \\
kneser\_10\_4\_1 &	330,751 &	3 &	3.00 &	349,651 &	16 &	2.84 &	992,252 \\
12month1 &	12,471 &	75,355 &	1814.19 &	872,622 &	3,420 &	25.93 &	22,624,727 \\
IMDB &	428,440 &	1,334 &	8.83 &	896,308 &	1,590 & 	4.22 &	3,782,463 \\
GL7d16 & 460,261 &	114 &	31.48 &	955,128 &	64 & 15.17 & 14,488,881 \\
wheel\_601 &	723,605 &	3 &	3.00 &	902,103 &	602 &	2.41 &	2,170,814 \\
Rucci1 &  109,900 &	108 &	70.89 &	1,977,885 &	4 &	3.94 &	7,791,168 \\
LargeRegFile & 801,374 &	655,876 &	6.17 &	2,111,154 &	4 &	2.34 &	4,944,201 \\
GL7d20 &    1,437,547 &	395 &	20.79 &	1,911,130 &	43 &	15.64 &	29,893,084 \\
GL7d18 &    1,548,650 &	69 &	22.98 &	1,955,309 &	73 &	18.20 &	35,590,540 \\
GL7d19 &    1,911,130 &	121 &	19.53 &	1,955,309 &	54 &	19.09 &	37,322,725 \\
relat9 &	549,336 &	227 &	70.91 &	12,360,060 &	4 &	3.15 &	38,955,420 \\
 \hline
 \end{tabular}
 \label{Table:Problems}
 }
 \end{table}

Our test set  consists of nineteen real-world bipartite  graphs taken from the  
University of Florida Matrix collection \cite{FMC11}
covering several application areas. 
We chose the largest rectangular matrices in the collection, and then added a few smaller matrices. 
Table \ref{Table:Problems} gives some statistics  on  our test set.
The bipartite graphs are listed in increasing order of the total number of vertices.
The largest number of vertices of any graph is  nearly $13$ million, 
and the largest number of edges is $46$ million. 
For each vertex set $V_i$ in the bipartition we list the cardinality of the set,
and the maximum and average vertex degrees. 
The average degree  varies from two  to four  thousand,
and hence the graphs are diverse with respect to their degree distributions. 
The weights of the vertices were generated as random integers in the range $[1, 1000]$. 
We compare the performance of six different exact and approximate matching algorithms. Two of these are  an exact maximum edge-weighted matching 
algorithm (MEM), and an exact maximum vertex-weighted matching algorithm (MVM). 
Two are the $2/3$- and $1/2$-approximate MVM algorithms based on finding short augmenting paths that we have discussed in this paper. 
The last two algorithms are obtained from the $(1-\epsilon)$-approximation algorithm for MEM based on the scaling approach of Duan and Pettie,
where we have chosen $\epsilon$ equal to $1/3$ and $1/6$ to obtain $2/3$- and $5/6$-approximation algorithms. 

The MEM algorithm is a primal-dual algorithm for sparse bipartite graphs with 
$O(n m \log n)$ time complexity~\cite{galil1},
which has been implemented in the {\tt Matchbox}  software by our research group.  
We apply the MEM algorithms to the vertex-weighted matching problems by  assigning to each edge the sum of the weights of its endpoints. 
The Exact MVM algorithm we have implemented is Algorithm~\ref{Exact-General-VWM},
and not the Spencer and Mayr algorithm, for the following reasons: The former algorithm is easy to implement, and has good performance, while the latter is more complicated to implement.  As can be seen from the earlier work on matchings discussed in Section~\ref{section:introduction}, asymptotically fastest algorithms are not necessarily the fastest algorithms in  practice.  
Finally, our focus in this paper is on the approximation algorithms. 

\begin{table}
 {
  \caption{\textit{Comparing the  weight of the matchings
  computed by six different algorithms. The Exact MVM and MEM algorithms compute
  the same matching, and for these we report the absolute values of these 
  quantities. The results of the four approximation algorithms are reported as the ratio of the weight  to the weight of the 
  exact algorithms. 
   } }
 \footnotesize
 \centering

 \begin{tabular}[!h]{|l|r|r|r|r|r|}
 \hline
 \textbf{Graph}  & Exact Algorithms & \multicolumn{4}{c|}{Appr. Algorithms}\\
        &  & \multicolumn{2}{c|}{Aug. Path Approach}  & \multicolumn{2}{c|}{Scaling Approach} \\
 &  absolute	 &	2/3-Appr. & 1/2-Appr. &	 2/3-Appr. & 	5/6-Appr.\\
                & weight & \multicolumn{4}{|c|}{relative weight} \\                         
 \hline \hline
Trec10	&	1.43E+04	&	0.999	&	0.988	&	0.942	&	0.984	\\
IG5-16	&	1.16E+07	&	0.987	&	0.933	&	0.906	&	0.928	\\
fxm3\_16	&	5.06E+07	&	0.995	&	0.963	&	0.962	&	0.971	\\
JP	&	3.42E+07	&	0.989	&	0.956	&	0.924	&	0.956	\\
flower\_8\_4	&	6.99E+07	&	0.990	&	0.963	&	0.958	&	0.969	\\
spal\_004	&	1.51E+07	&	1.000	&	0.996	&	0.880	&	0.923	\\
pds-50	&	1.06E+08	&	0.996	&	0.980	&	0.953	&	0.968	\\
image\_interp	&	1.48E+08	&	0.993	&	0.965	&	0.933	&	0.946	\\
kneser\_10\_4\_1	&	3.36E+08	&	0.996	&	0.960	&	0.962	&	0.964	\\
12month1	&	1.82E+07	&	0.999	&	0.991	&	0.875	&	0.921	\\
IMDB	&	3.04E+08	&	0.987	&	0.927	&	0.927	&	0.940	\\
GL7d16	&	5.76E+08	&	0.995	&	0.942	&	0.988	&	0.994	\\
wheel\_601	&	7.84E+08	&	0.990	&	0.903	&	0.931	&	0.947	\\
Rucci1	&	1.62E+08	&	0.999	&	0.997	&	0.918	&	0.954	\\
LargeRegFile	&	9.72E+08	&	0.998	&	0.979	&	0.957	&	0.969	\\
GL7d20	&	1.61E+09	&	0.998	&	0.948	&	0.990	&	0.994	\\
GL7d18	&	1.70E+09	&	0.993	&	0.921	&	0.995	&	0.996	\\
GL7d19	&	1.92E+09	&	0.994	&	0.926	&	0.994	&	0.995	\\
relat9	&	4.08E+08	&	1.000	&	0.999	&	0.910	&	0.948	\\[1ex]
\hline
Geom. Mean	&	1.00	&	0.995	&	0.960	&	0.942	&	0.961	\\

 \hline
 \end{tabular}
 \label{table:Weight}
 }
 \end{table}
 
 \begin{table}
 {
  \caption{\textit{Comparing the cardinality  of the matchings
  computed by six different algorithms. The Exact MVM and MEM algorithms compute
  the same matching, and for these we report the absolute values of these 
  quantities. The results of the four approximation algorithms are reported as the ratio of the cardinality to cardinality of the 
  exact algorithms. 
   } }
 \footnotesize
 \centering

 \begin{tabular}[!h]{|l|r|r|r|r|r|}
 \hline
 \textbf{Graph}  & Exact Algorithms & \multicolumn{4}{c|}{Appr. Algorithms}\\
        &  & \multicolumn{2}{c|}{Aug. Path Approach}  & \multicolumn{2}{c|}{Scaling Approach} \\
 &  absolute &	2/3-Appr. & 1/2-Appr. &	 2/3-Appr. & 	5/6-Appr.\\
                & cardinality & \multicolumn{4}{|c|}{relative cardinality} \\                        
 \hline \hline
Trec10	&	 106 	&	1.000	&	1.000	&	1.000	&	1.000	\\
IG5-16	&	 9,519 	&	1.000	&	1.000	&	0.953	&	0.965	\\
fxm3\_16	&	 41,340 	&	1.000	&	0.999	&	0.994	&	0.996	\\
JP	&	 26,137 	&	0.994	&	0.979	&	0.978	&	0.980	\\
flower\_8\_4	&	 55,081 	&	1.000	&	0.999	&	0.997	&	0.998	\\
spal\_004	&	 10,203 	&	1.000	&	1.000	&	1.000	&	1.000	\\
pds-50	&	 82,837 	&	1.000	&	1.000	&	0.996	&	0.996	\\
image\_interp	&	 120,000 	&	1.000	&	1.000	&	0.999	&	0.999	\\
kneser\_10\_4\_1	&	 323,401 	&	0.999	&	0.969	&	0.970	&	0.974	\\
12month1	&	 12,418 	&	1.000	&	1.000	&	0.999	&	0.999	\\
IMDB	&	 250,516 	&	0.992	&	0.958	&	0.955	&	0.962	\\
GL7d16	&	 460,091 	&	1.000	&	0.999	&	1.000	&	1.000	\\
wheel\_601	&	 723,005 	&	1.000	&	0.930	&	0.940	&	0.960	\\
Rucci1	&	 109,900 	&	1.000	&	1.000	&	1.000	&	1.000	\\
LargeRegFile	&	 801,374 	&	1.000	&	0.999	&	0.998	&	0.999	\\
GL7d20	&	 1,437,546 	&	1.000	&	0.992	&	0.999	&	1.000	\\
GL7d18	&	 1,548,499 	&	1.000	&	0.974	&	0.999	&	0.999	\\
GL7d19	&	 1,911,130 	&	0.998	&	0.920	&	0.965	&	0.971	\\
relat9	&	 274,667 	&	1.000	&	1.000	&	1.000	&	1.000	\\[1ex]
\hline
Geom. Mean	&	1.00	&	0.999	&	0.985	&	0.986	&	0.989	\\
 \hline
 \end{tabular}
 \label{table:Cardinality}
 }
 \end{table}

We compare the weights computed by the six algorithms in 
Table~\ref{table:Weight}. Both exact MEM and MVM algorithms compute 
the same matching, and hence we report one set of weights and cardinalities for these algorithms. We report the absolute weight obtained by the exact algorithms,
and for the approximation algorithms report the 
fraction of the maximum weight obtained by them. 
We report the cardinality of the matchings computed by the 
six algorithms in Table~\ref{table:Cardinality}. 
The results are reported in a format similar to that for the weights. 

Ten of these graphs have their MVM corresponding to 
$V_1$-perfect matchings, i.e., the cardinality of the MVM is equal to the 
cardinality of the smaller vertex set $V_1$. 
There are also four graphs where the cardinality is lower than almost half  the value of $|V_1|$: IG5-16, JP, IMDB, and relat9. 
All of the four approximation algorithms compute weights that are higher  than $90\%$ of the maximum weight obtained by the exact algorithm (with two exceptions),
much higher than the guaranteed approximation ratios ($1/2$, $2/3$, or $5/6$). When we consider the geometric means, the $2/3$-approximation 
MVM algorithm obtains $99.5\%$ of the weight, 
and $99.9\%$ of the cardinality of the maximum weight matching. 
The $1/2$-approximation MVM algorithm obtains values that are  lower, 
$96\%$ for the weight and $98.5\%$ for the cardinality. 
The scaling algorithms perform worse than the augmenting path-based approximation  algorithms: even the $5/6$-approximation scaling algorithm
obtains only $96.1\%$ of the weight and $98.9\%$ of the cardinality,
values that are lower than the $2/3$-approximation MVM. 
The relative weights of the matchings computed by  three of the approximation algorithms are plotted in Figure~\ref{Fig:weights}. The problems are listed in
order of increasing relative weight of the $2/3$-approximation MVM algorithm. 

We compare the run-times of the Exact MEM, Exact MVM, and the four  approximation algorithms in Table~\ref{Table:RunningTime}.
The time (in seconds) taken by the Exact MEM algorithm to compute the maximum weight matching is reported; for the other five algorithms, we report the relative performance, which is the ratio of the run-time of the 
MEM algorithm to the run-time of each of the other algorithms. Thus the values in the Table are proportional to the reciprocal running time, and higher the value, the faster the algorithm. 

The Exact MEM  algorithm is fast for the smaller problems, but as the number of vertices and edges gets into the tens of millions, it can require more than $15$ hours (on graph GL7d18) to compute the matching. The maximum time needed by the Exact MVM algorithm on any graph is $22$ minutes for GL7d18 again;  the $2/3$-approximation MVM algorithm 
takes the maximum time of $4.7$ seconds on the GL7d19 graph; the scaling algorithms can take 1 minute for relat9 ($5/6$-approximation) and 40 seconds ($2/3$-approximation) for the same problem. 
In terms of geometric means, the exact MVM algorithm is $12$ times faster
than the exact MEM algorithm, while the $2/3$-approximation MVM is 
$63$ times faster, and the $1/2$-approximation algorithm is $100$ times faster, both relative to the Exact MEM algorithm. 
The scaling algorithms are only $7$ times faster ($2/3$-approximation)
and $5$ times faster ($5/6$-approximation) than the exact MEM algorithm. 
Note also that in general the run-times increase with the size of the graph, but they also depend on  how the edges are distributed within the graph. 

These results are plotted in Figure~\ref{Fig:performance}, where the $y$-axis is in logarithmic scale. Note that for the four largest graphs (the three GL7d graphs and relat9), the $2/3$-approximation MVM algorithm is more than $1,000$ times faster than the Exact MEM algorithm. The scaling algorithms are slower than it, but they perform relatively better for larger graphs relative to the Exact MEM algorithm, while they are slower for the smaller graphs. The Exact MVM algorithm tracks the $2/3$-approximation MVM algorithm for smaller graphs, but it is much slower  for the larger problems. 

\chg{The exact algorithms for both MEM and MVM are our implementations, and we spent a reasonable amount of effort to make them efficient; however, these are more sophisticated than the approximation algorithms, which are simpler to implement. Hence 
it might be possible to make the exact algorithms faster with optimizations that we have not considered, but the lower asymptotic time complexities of the approximation algorithms will make them faster in practice, as our results show.}


\begin{figure}[!thp]
\centering
\includegraphics[scale=0.65]{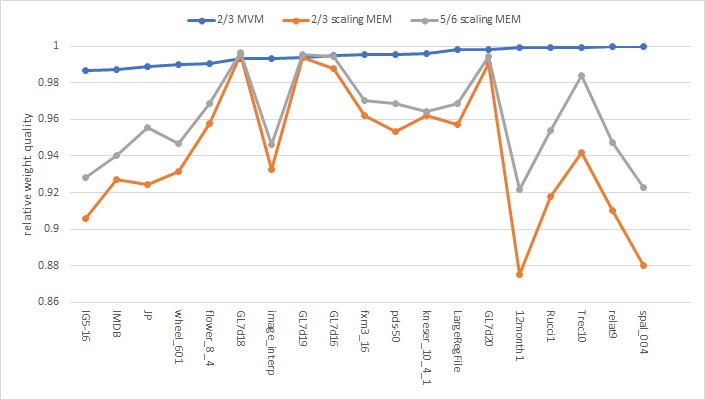}
\caption{ \textit{The weights of the matchings computed by the approximation
algorithms relative to the Exact MVM algorithm. 
The $x$-axis lists the nineteen problems in increasing order of the relative weight obtained by the $2/3$-approximation MVM algorithm, and the $y$-axis shows the percentage of the 
weight  obtained by three approximation algorithms. 
}}
\label{Fig:weights}
\end{figure}

 \begin{table}
 {
  \caption{\textit{The relative performance of the Exact MVM 
  algorithm and the four approximation algorithms relative to the 
  Exact MEM algorithm. We report run-times of the Exact MEM algorithm
  in seconds; for all others, we report the relative performance, which is the ratio of the runtime of the Exact MEM algorithm to that of the 
  other algorithms. Hence the value for an algorithm  shows how fast it is 
  relative to the Exact MEM algorithm.
 }}
 \footnotesize
 \centering

 \begin{tabular}[!h]{|l|r|r|r|r|r|r|}
 \hline
 \textbf{Graph}  & \multicolumn{2}{c|}{Exact Algorithms} & \multicolumn{4}{c|}{Approx. Algorithms}\\
        & MEM & MVM & \multicolumn{2}{c|}{Aug. Path Approach}  & \multicolumn{2}{c|}{Scaling Approach} \\
         &    &     &  2/3-Appr. & 	1/2-Appr. &	 2/3-Appr. & 	5/6-Appr.\\
& Time (s)& \multicolumn{5}{|c|}{Relative Performance} \\
   \hline \hline
Trec10	&	2.36E-3	&	16.71	&	24.17	&	32.11	&	1.67	&	0.87	\\
IG5-16	&	1.50	&	17.84	&	171.82	&	274.39	&	14.09	&	9.54	\\
fxm3\_16	&	6.95E-2	&	1.58	&	2.26	&	3.18	&	0.42	&	0.29	\\
JP	&	46.82	&	15.75	&	322.49	&	604.48	&	16.50	&	10.62	\\
flower\_8\_4	&	0.83	&	8.16	&	18.91	&	25.92	&	2.67	&	1.86	\\
spal\_004	&	53.57	&	71.87	&	150.48	&	218.84	&	6.36	&	4.54	\\
pds-50	&	0.12	&	0.84	&	1.32	&	1.75	&	0.14	&	0.10	\\
image\_interp	&	0.13	&	0.86	&	1.25	&	1.61	&	0.23	&	0.16	\\
kneser\_10\_4\_1	&	1.01	&	2.51	&	4.36	&	5.09	&	0.77	&	0.54	\\
12month1	&	28.24	&	35.90	&	78.27	&	117.15	&	4.70	&	3.12	\\
IMDB	&	3.09	&	3.74	&	6.88	&	10.93	&	0.57	&	0.37	\\
GL7d16	&	6.63E+3	&	98.86	&	7.39E+3	&	1.39E+4	&	725.45	&	397.21	\\
wheel\_601	&	0.72	&	0.62	&	1.17	&	1.51	&	0.18	&	0.12	\\
Rucci1	&	41.03	&	21.42	&	65.23	&	93.67	&	6.93	&	4.82	\\
LargeRegFile	&	2.08	&	1.17	&	1.97	&	2.79	&	0.53	&	0.38	\\
GL7d20	&	1.85E+4	&	385.41	&	5.78E+3	&	1.37E+4	&	746.76	&	482.75	\\
GL7d18	&	5.02E+4	&	37.97	&	1.34E+4	&	3.43E+4	&	1.85E+3	&	1.20E+3	\\
GL7d19	&	2.57E+4	&	162.21	&	5.47E+3	&	1.53E+4	&	1.00E+3	&	689.35	\\
relat9	&	3.63E+3	&	85.38	&	877.71	&	1.35E+3	&	87.64	&	60.44	\\[1ex]
\hline
Geom.~Mean&	1.00 &	12.02	&	62.64	&	99.66	&	6.99	&	4.64	\\
 \hline
 \end{tabular}
 \label{Table:RunningTime}
 }
 \end{table}

\begin{figure}[!thp]
\centering
\includegraphics[scale=0.65]{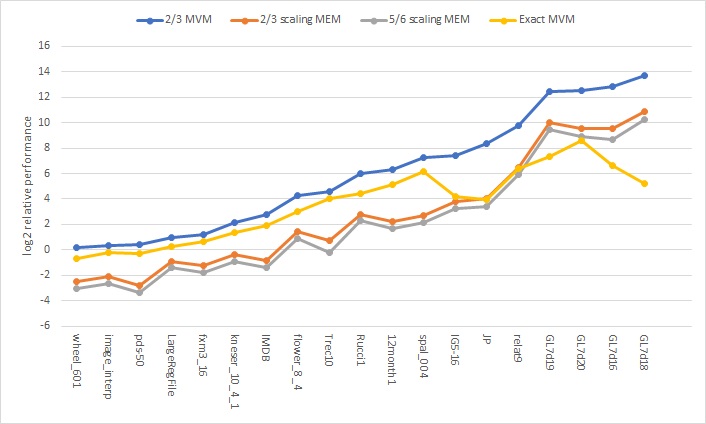}
\caption{ \textit{The relative performance (proportional to the reciprocal run-time) of the Exact MVM algorithm and the four approximation algorithms relative to the Exact MEM algorithm. 
The $x$-axis lists the nineteen problems in increasing order of the relative performance of the $2/3$-approximation MVM algorithm, and the $y$-axis shows the relative performance of the algorithms 
in logarithmic scale. 
}}
\label{Fig:performance}
\end{figure}

\begin{table}
\begin{center} 
 {
 \chg{
  \caption{\textit{The relative performance of the Exact 
  MEM algorithm for four choices of weights for the {\tt GL7d20} graph. 
  The first set of weights correspond to the original matrix values, the second and third to random edge weights in the ranges shown, and the fourth 
  to random vertex weights in $[0.5, 1000]$ that are summed to create edge
  weights. The results show that the fourth choice of weights
  leads to large runtimes due to the greatly increased length of the augmenting path searches. 
 }}
 \footnotesize
 \begin{tabular}[!h]{|l|r|r|r|r|r|}
 \hline
 & \multicolumn{5}{|c|}{Random Weights} \\
 \textbf{Metric}  & Original & edge & edge &  vertex & vertex \\ 
            &   & $[1, 2000]$ & $[0, 1]$ & $[0.5, 1000]$  & $[0, 0.5]$ \\
    &  &  &  &summed & summed \\
   \hline 
Time (s)	& 4.61 E0	& 1.366 E1	&	1.330 E1 & 1.831 E4 & 1.001 E5
\\ 
Cardinality & 1,437,546	& 1, 437, 545 & 1,437,546	&1,437,546,  &1,437,546\\ 	
No. augmentations  & 1,437,546	& 1, 437, 545 & 1,437,546	&1,437,546  & 1,437,546\\ 	
Aug.~path lengths & & & & & \\
Maximum &  9  & 55 & 61 & 181 & 2383\\
No.~distinct  & 5 & 28 & 30 & 74 & 938 \\
Mean  & 1.009 & 1.896 & 1.896 & 7.126 & 72.55\\[1ex] 
No.~dual updates & 1.350 E7 & 7.499 E6 & 7.485 E6 & 2.728 E10 & 3.921 E10\\
Time aug.~paths  (s) & 3.38 & 9.67 & 9.72 & 1.128 E4 & 4.314 E4
\\
Time dual updates (s) & 0.42 & 0.26 & 0.27 & 2.369 E3 & 4.435 E3\\
 \hline
 \end{tabular}
 \label{Table:GL7d20-weights}
 }
 }
 \end{center} 
 \end{table}

\chg{The large running times of the exact MEM algorithms on MVM  problems is due to the mismatch between algorithm and problem. Adding vertex weights to create edge weights causes edges incident on a vertex  to have highly correlated weights,  and this increases the average length of the augmenting paths  searched in the course of the algorithm. Table~\ref{Table:GL7d20-weights} 
shows various metrics for the {\tt GL7d20} matrix, one of the matrices for which the MEM algorithm takes a long time.
We show what happens to several metrics when we use the original matrix weights, two sets of random values for the edge weights, and two sets of random values for the vertex weights that are  summed
to create edge weights. We have used two ranges of weights, 
integers in the range $[1\ 2000]$, and real numbers in $(0 \ 1]$ to see how the range of weights influences the runtimes of the algorithms. 
Note that the runtime needed for the first three choices of weights is three to four orders of magnitude smaller than the times for the 
last two sets of weights. We also break down the time taken for the augmenting path searches and the dual weight updates. 
The runtime is largely accounted for by the time taken for augmenting path searches for every experiment but the last. The large increase in the runtime for random vertex weights is caused by the increase in the average length of an augmenting paths, and  for the case of real-valued weights, the larger time to process real-valued dual weights. 
Note  that the range of weights does not influence the runtimes significantly for edge weights. 
However, when real-valued weights in $(0\ 0.5])$ are used for vertex weights, the algorithm requires nearly $28$ hours! In this case, integer weights cause the algorithm to take $5$ hours. 
We leave a more thorough evaluation of all the 
causes for this for the future. 
However, these results support our contention that MVM problems should be solved by algorithms designed specifically for them,  rather than by converting them to MEM problems and then using MEM algorithms. 
}

\section{Conclusions}
\label{sec:sectionConclusions}

We have described a $2/3$-approximation algorithm for 
MVM  in bipartite graphs whose time complexity is $O(n \log n + m)$,
whereas the time complexity of an Exact algorithm is $O(n^{1/2} m \log n)$. 
The algorithm exploits the bipartiteness  of the graph
by decomposing the problem into two `one-side-weighted' problems, solving them 
individually,  and then combining the two matchings into the final matching. 
The algorithm also  sorts  the weights, processing the unmatched vertices in non-decreasing order of weights. 
The approximation algorithm is derived in a natural manner from an  algorithm 
for computing maximum weighted matchings by restricting the length of 
augmenting paths to at most three.

The $2/3$-approximation algorithm has been implemented in C++, and on a set of nineteen graphs, some with millions of vertices and edges, it computes the approximate matchings in less than $5$ seconds on a desktop processor. The weight of the approximate matching is greater than  $99\%$ of the weight of the optimal matching for these problems. A Greedy $1/2$-approximation algorithm is faster than the $2/3$-approximation algorithm by about a   factor of $1.5$, but the weight it computes is lower, and can be as low as $90\%$ on the worst problem. 
A path on four vertices $P_4 = \{v_1, v_2, v_3, v_4\}$,  where the sum of the weights of  vertices  $v_2$ and  $v_3$ is the heaviest over all consecutive pairs of vertices, is a contrived worst-case example for the Greedy algorithm,  but the $2/3$-approximation algorithm computes the maximum weight matching. 
Several copies of the  path $P_4$ can be joined together in a suitable manner to construct larger graphs where the same property holds. 
Whether the the Greedy $1/2$- or the $2/3$-approximation algorithm is to be preferred, trading off run time for increased weight, would depend on the context in which it is being used.  For example, it is known from experience that the $1/2$-approximation MEM algorithms do not lead to good orderings for sparse Gaussian elimination.  Recent work suggests that implementations of the $2/3-\epsilon$-approximation algorithm leads to better matrix orderings for this problem~\cite{Azad+:Arxiv}. 

The Exact MVM algorithm  shows that  the ``structure'' of the vertex weighted matching problem is  closer to the maximum cardinality matching problem rather than the maximum edge-weighted matching problem (MEM), in that we do not need to 
invoke linear programming duality,  and compute and update dual weights. 

We have also implemented the 
$(1- \epsilon)$-approximation algorithm for maximum edge-weighted matching, based on scaling the weights,  designed by Duan and Pettie~\cite{Duan+:jacm}. 
This algorithm is quite sophisticated, and can be applied to the MVM problem by transforming it into an MEM problem. 
However, our results show that it is an order of magnitude or more slower than the $2/3$-approximation algorithm for MVM; 
it also obtains lower weights for the approximate matching, even when we seek a $5/6$-approximation. 
An approximation algorithm for MEM 
analogous to the  $2/3$-approximation algorithm for MVM is not known that works on augmenting path lengths. 

Recent developments in  half-approximation algorithms for MEM (e.g., the Locally Dominant edge and Suitor algorithms~\cite{MH14}) show that we should be able to use
these algorithms that avoid sorting and obtain $1/2$-approximations for the MVM. Could similar algorithms be 
developed for the MVM problem to obtain 
$2/3$-approximation  without sorting? 
The speculative approach to solving the MVM problem employs a different strategy by first computing a MCM and then using increasing paths to improve the weight. This is the scope of our current work. 

The proof technique used  in this paper cannot be extended to obtain 
approximation ratios higher than $2/3$, since Lemma~\ref{lem:short-aug-incr-paths}
does not hold for higher (augmenting or increasing) path lengths.
Consideration of the MEM problem again suggests that 
there are other approaches that would lead to better approximation ratios, although they might not necessarily lead to practical improvements, given the high matching weights obtained from the approximation algorithms discussed here. 

Finally, we believe that the idea  of restricting the augmenting path length 
to at most three could lead to a $2/3$-approximation algorithm for non-bipartite graphs, 
although we will no longer be able to invoke the Mendelsohn-Dulmage theorem, and a different proof technique will be required. 



\bibliographystyle{siamplain}
\bibliography{refs1,refs2}
\end{document}